\documentclass[lettersize,journal]{IEEEtran}
\usepackage{amsmath,amsfonts,amssymb,bm,amsthm}
\usepackage{color}
\usepackage{algorithmic}
\usepackage{array}
\usepackage{caption}
\usepackage[justification=raggedright,singlelinecheck=false,font=footnotesize]{caption}
\usepackage[subrefformat=parens,labelformat=simple,justification=centerlast]{subcaption}
\usepackage{textcomp}
\usepackage{algorithm, algorithmic}
\usepackage{stfloats}
\usepackage{url}
\usepackage{verbatim}
\usepackage{graphicx}
\def\BibTeX{{\rm B\kern-.05em{\sc i\kern-.025em b}\kern-.08em
		T\kern-.1667em\lower.7ex\hbox{E}\kern-.125emX}}
\usepackage{balance}
\usepackage{threeparttable}
\usepackage{pifont}
\usepackage{bbding}
\usepackage{multirow}
\usepackage{makecell}
\usepackage{booktabs}
\usepackage{epstopdf}
\usepackage{epsfig, latexsym, times}
\usepackage{float}
\usepackage{subcaption}
\usepackage{cases}
\usepackage{stfloats}
\usepackage[colorlinks=true, allcolors=blue]{hyperref}

\newtheorem{proposition}{Proposition}

\usepackage{acro}
\DeclareAcronym{isac}{
	short = ISAC,
	long  = Integrated Sensing and Communication
}

\DeclareAcronym{dfrc}{
	short = DFRC,
	long  = Dual Functional Radar and Communication
}
\DeclareAcronym{ra}{
	short = RA,
	long  = Reconfigurable Antenna
}
\DeclareAcronym{rf}{
	short = RF,
	long  = Radio Frequency
}

\DeclareAcronym{em}{
	short = EM,
	long  = Electromagnetic
}

\DeclareAcronym{fd}{
	short = FD,
	long  = Fully Digital
}
\DeclareAcronym{mu}{
	short = MU,
	long  = Multi-user
}

\DeclareAcronym{su}{
	short = SU,
	long  = Single-user
}

\DeclareAcronym{mimo}{
	short = MIMO,
	long  = Multiple-input Multiple-output
}

\DeclareAcronym{miso}{
	short = MISO,
	long  = Multiple-input Single-output
}

\DeclareAcronym{sust}{
	short = SUST,
	long  = Single-user Single-target
}

\DeclareAcronym{mumt}{
	short = MUMT,
	long  = Multiple-user Multiple-target
}

\DeclareAcronym{cus}{
	short = CUs,
	long  = Communication Users
}
\DeclareAcronym{cu}{
	short = CU,
	long  = Communication User,
	first-style = short
}
\DeclareAcronym{hbf}{
	short = HBF,
	long  = Hybrid Beamforming
}
\DeclareAcronym{dof}{
	short = DoF,
	long  = Degree of Freedom
}
\DeclareAcronym{dofs}{
	short = DoFs,
	long  = Degrees of Freedom
}
\DeclareAcronym{era}{
	short = ERA,
	long  = Electromagnetically Reconfigurable Antenna
}
\DeclareAcronym{oa}{
	short = OA,
	long  = Omnidirectional Antenna
}

\DeclareAcronym{ba}{
	short = BA,
	long  = Broadside Antenna
}

\DeclareAcronym{cgv}{
	short = CGV,
	long  = Complex Channel Gain Vector
}
\DeclareAcronym{agv}{
	short = AGV,
	long  = Antenna Gain Vector
}
\DeclareAcronym{arv}{
	short = ARV,
	long  = Antenna Response Vector
}
\DeclareAcronym{ue}{
	short = UE,
	long  = User Equipment
}

\DeclareAcronym{ula}{
	short = ULA,
	long  = Uniform Linear Array
}

\DeclareAcronym{upa}{
	short = UPA,
	long  = Uniform Planar Array
}

\DeclareAcronym{los}{
	short = LoS,
	long  = Line-of-Sight
}
\DeclareAcronym{aod}{
	short = AoD,
	long  = Angle-of-Departure
}

\DeclareAcronym{awgn}{
	short = AWGN,
	long  = Additive White Gaussian Noise
}
\DeclareAcronym{snr}{
	short = SNR,
	long  = Signal-to-Noise Ratio
}
\DeclareAcronym{sinr}{
	short = SINR,
	long  = Signal-to-Interference-plus-Noise Ratio
}
\DeclareAcronym{scnr}{
	short = SCNR,
	long  = Signal-to-Clutter-plus-Noise Ratio
}
\DeclareAcronym{mse}{
	short = MSE,
	long  = Mean Squared Error
}
\DeclareAcronym{bcd}{
	short = BCD,
	long  = Block Coordinate Descent
}

\DeclareAcronym{ao}{
	short = AO,
	long  = Alternating Optimization
}

\DeclareAcronym{fp}{
	short = FP,
	long  = Fractional Programming
}

\DeclareAcronym{mo}{
	short = MO,
	long  = Manifold Optimization
}

\DeclareAcronym{sip}{
	short = SIP,
	long  = Semi-Infinite Programming
}

\DeclareAcronym{rmi}{
	short = RMI,
	long  = Radar Mutual Information
}

\DeclareAcronym{sandc}{
	short = S\&C,
	long  = Sensing and Communication
}

\DeclareAcronym{itu}{
	short = ITU,
	long  = International Telecommunication Union
}

\DeclareAcronym{thbf}{
	short = Tri-HBF,
	long  = Triple-Hybrid Beamforming
}

\DeclareAcronym{sdr}{
	short = SDR,
	long  = Semidefinite Relaxation
}
\DeclareAcronym{sdp}{
	short = SDP,
	long  = Semidefinite Programming
}

\DeclareAcronym{kkt}{
	short = KKT,
	long  = Karush-Kuhn-Tucker 
}

\DeclareAcronym{wrt}{
	short = w.r.t.,
	long  = with respect to
}

\DeclareAcronym{doa}{
	short = DoA,
	long  = Direction-of-Arrival
}

\DeclareAcronym{espar}{
	short = ESPAR,
	long  = Electronically Steerable Parasitic Array Radiator 
}

\DeclareAcronym{dma}{
	short = DMA,
	long  = Dynamic Metasurface Antenna
}
\DeclareAcronym{crb}{
	short = CRB,
	long  = Cramér–Rao Bound
}
\DeclareAcronym{cpu}{
	short = CPU,
	long  = Central Processing Unit
}

\makeatletter
\renewenvironment{proof}[1][\proofname]{\par
	\pushQED{\qed}%
	\normalfont \topsep6\p@\@plus6\p@\relax
	\trivlist
	\item[\hskip\labelsep
	\itshape 
	#1\@addpunct{:}]\ignorespaces  
}{
	\popQED\endtrivlist\@endpefalse
}
\makeatother

\renewcommand{\proofname}{Proof}

\begin{document}
	\title{Tri-Hybrid Beamforming Design for ISAC Systems with  Reconfigurable Antennas}
	\author{Jiangong Chen, \IEEEmembership{Graduate Student Member,~IEEE}, 
		Xia Lei,
		Yuchen Zhang, \IEEEmembership{Member,~IEEE}, 
		Kaitao Meng, \IEEEmembership{Member,~IEEE}, and
		Christos Masouros, \IEEEmembership{Fellow,~IEEE}
		\vspace{-10mm}
		
		\thanks{

			J. Chen and X. Lei are with the National Key Laboratory of Wireless Communications, University of Electronic Science and Technology of China, Chengdu, China (e-mail: jg\_chen@std.uestc.edu.cn,  leixia@uestc.edu.cn).
			
			Y. Zhang is with the Electrical and Computer Engineering Program, Computer, Electrical and Mathematical Sciences and Engineering (CEMSE), King Abdullah University of Science and Technology (KAUST), Thuwal 23955-6900, Kingdom of Saudi Arabia (e-mail: yuchen.zhang@kaust.edu.sa).
					
			K. Meng is with the Department of Electrical and Electronic Engineering, University of Manchester, Manchester, UK (email: kaitao.meng@manchester.ac.uk).
					
			C. Masouros is with the Department of Electronic and Electrical Engineering, University College London, London, UK (e-mail: c.masouros@ucl.ac.uk).
			
		}
	}
	\markboth{Journal of \LaTeX\ Class Files,~Vol.~18, No.~9, September~2020}%
	{How to Use the IEEEtran \LaTeX \ Templates}
	
	\maketitle\textbf{}
	\maketitle
	
	\begin{abstract}
		\ac{isac} systems require efficient beamforming architectures to jointly support communication and sensing functionalities. To reduce hardware overhead, \ac{hbf} has been widely studied and shown to achieve performance close to fully digital beamforming under practical hardware constraints. As a promising evolution, \ac{ra} technologies have recently emerged to further enhance beamforming \ac{dofs} by dynamically reconfiguring antenna \ac{em} characteristics, yet their integration into \ac{isac} systems remains largely unexplored. In this paper, we investigate an \ac{ra}-assisted \ac{isac} system and develop a decoupled \ac{thbf} framework that alternatively optimizes digital, analog, and \ac{em} beamformers to maximize the communication rate and sensing \ac{scnr}. 
		For both \ac{sust} and \ac{mumt} scenarios, we first transform the original fractional objectives into fraction-free ones via methods tailored to their respective structures. The resulting problems are then solved via alternating optimization over different variable blocks. Closed-form updates are derived for all variables except the \ac{em} beamforming subproblem in the \ac{mumt} scenario. To further reduce complexity introduced by \ac{sdr} in \ac{em} beamforming, we propose a low-complexity iterative approach across antennas with closed-form updates.
		Simulation results demonstrate that the proposed scheme significantly outperforms benchmark designs with conventional omnidirectional and directional antennas, achieving almost 100\% improvement in spectrum efficiency and 62.5\% reduction in antenna overhead, thereby unveiling the potential of \ac{ra} in enhancing \ac{isac} performance towards 6G.

	\end{abstract}
	\acresetall

	\begin{IEEEkeywords}
		Integrated sensing and communication, reconfigurable antenna, tri-hybrid beamforming.
	\end{IEEEkeywords}
	
	\section{Introduction}
	As one of the six usage scenarios of IMT-2030 \cite{IMT2030}, \ac{isac} has become a cornerstone technology for the forthcoming 6G networks, offering significant advantages in efficiently combining communication and sensing functionalities \cite{ISACBack1,ISACBack2}. By integrating both capabilities, \ac{isac} systems make more efficient use of hardware and spectrum resources, significantly enhancing the sustainability of wireless networks.

	A key enabler of \ac{isac} is \ac{mimo} technology, which offers abundant \acp{dof} for both communication and sensing. On the communication side, \ac{mimo} techniques such as beamforming, space-time coding, and \ac{mu} precoding provide high spectral efficiency, spatial diversity, and interference suppression. On the sensing side, \ac{mimo} processing also enables high-resolution \ac{doa} estimation, active sensing, and robust target detection. Motivated by these advantages, a large body of literature has investigated beamforming design for \ac{isac} systems. Representative formulations directly optimize communication and sensing \ac{sinr}/\ac{scnr} \cite{isacbfsinr1,isactanbo,dp}, adopt transmit beampattern matching \cite{isacbf1,isacbf2,isacbf3}, exploit estimation-theoretic criteria such as the \ac{crb} \cite{isaccrb}, or employ sensing mutual information as an information-theoretic metric \cite{distortion,isacmsi1,isacmsi2}. These studies have established the fundamental algorithmic framework of \ac{isac} beamforming, but most of them are mainly developed under conventional fully digital array architectures.

	As wireless systems continue to migrate toward mmWave and higher frequency bands, large-scale arrays become increasingly attractive for \ac{isac}, since narrow beams and large apertures are beneficial to both communication and sensing. However, fully digital beamforming requires one dedicated \ac{rf} chain per antenna, leading to prohibitive hardware cost and power consumption in such regimes. To address this issue, \ac{hbf} has been widely recognized as a practical solution \cite{HBF}. Existing \ac{isac} works have investigated fully-connected \ac{hbf} architectures \cite{isachbf1}, sub-connected architectures \cite{isachbf2}, and improved sub-connected implementations with enhanced phase-shifter flexibility \cite{isachbf3}. These studies show that \ac{hbf} can approach the performance of fully digital beamforming under favorable hardware configurations. Nevertheless, its achievable \acp{dof} are still fundamentally limited by fixed antenna positions and fixed radiation characteristics, which becomes particularly restrictive in \ac{isac} systems requiring simultaneous support for \ac{mu} communication and multi-target sensing.

	To further enlarge the design space, recent \ac{isac} research has moved beyond conventional \ac{hbf} toward antenna-domain reconfigurability. Among the emerging architectures, fluid \cite{fa}, movable \cite{ma}, and pinching \cite{pa} antennas belong to the category of position-reconfigurable antennas, while rotatable antennas \cite{6dma} provide orientation reconfigurability. Building on these hardware concepts, recent works have started to incorporate such reconfigurable antennas into \ac{isac} systems. For fluid antennas, early studies considered FA-assisted \ac{isac} and showed that jointly optimizing antenna positions and transmit beamforming can improve both communication and sensing performance \cite{fluidISAC0}. This line has quickly evolved toward more explicit trade-off-oriented designs, where fluid antennas are shown to enlarge the achievable communication--sensing region compared with conventional fixed-position arrays \cite{fluidISAC1}. For movable antennas, recent \ac{isac} designs have investigated bistatic multi-user systems in which the antenna coefficients and antenna positions are jointly optimized to enhance flexible beamforming performance \cite{maISAC}. For pinching antennas, very recent works have introduced PASS-enabled \ac{isac} frameworks, where pinching beamforming is exploited to establish reliable communication and sensing links, and the corresponding communication--sensing rate region has also been characterized \cite{passISAC1,passISAC2}. However, these position-reconfigurable architectures mainly enhance performance through geometry adaptation in the spatial domain, rather than electromagnetic control at the element level. 
	

	Another emerging direction is \ac{em}-domain reconfigurability for \ac{isac}. In our earlier conference work \cite{ChenTriHybridConf}, a \ac{ra}-aided \ac{thbf} framework was introduced, where digital, analog, and \ac{em} beamforming are jointly designed to enlarge the achievable communication-and-sensing trade-off region. More recently, compound reconfigurable antenna arrays were investigated for \ac{isac} clutter suppression \cite{CRAclutterISAC}, where radiation pattern and polarization states are jointly exploited in the \ac{em} domain to enhance sensing performance. Reconfigurable pixel antennas have also been incorporated into \ac{isac} systems, where binary-controlled pixel states provide additional \ac{em}-domain flexibility for joint communication and sensing optimization \cite{pixelISAC}. In parallel, polarization-reconfigurable antennas have recently been introduced into polarimetric \ac{isac}, where polarization agility is exploited to improve fairness while reducing \ac{rf}-chain cost \cite{polarISAC}. These works collectively indicate that recent progress in \ac{isac} is increasingly driven by new antenna architectures that introduce additional \ac{em}-domain \ac{dofs}. Nevertheless, the existing literature still has several limitations. Some works focus on highly specialized hardware freedoms, such as pixel-state or polarization reconfiguration, instead of more general pattern-domain beamforming. Some studies are developed under simplified sensing settings, such as single-target scenarios. In addition, certain works emphasize specific functionalities, such as clutter suppression or fairness, rather than pursuing a more general communication-and-sensing trade-off design. These limitations motivate a more systematic \ac{ra}-enabled \ac{isac} framework that can accommodate different levels of \ac{em}-domain reconfigurability under richer sensing environments.
	
	Motivated by the above observations, and following the \ac{ra} modeling frameworks established in communication-oriented studies in \cite{RA1,pingjunJournal,pingjunsurvey}, we consider two complementary \ac{ra} models for \ac{isac} systems, where \ac{thbf} is exploited for 
	both communication and sensing components in terms of \ac{sinr} enhancement, improved directional energy focusing, and interference mitigation between users and sensing targets. Specifically, the main contributions of this work can be summarized as follows
	\begin{itemize}
		\item We present one of the early systematic investigations of \ac{ra}-enabled \ac{isac} with \ac{thbf}. Specifically, we extend two representative \ac{em}-domain \ac{ra} modelings, as in \cite{RA1,pingjunJournal}, to a monostatic \ac{isac} system. The first one adopts a continuous spherical-harmonic-based radiation representation for \emph{arbitrary pattern generation}, which establishes an ideal upper bound on \ac{em}-domain beamforming flexibility. The second one utilizes radiation patterns data from realistic measurements in \cite{pingjunAntData}, yielding a physically realizable \emph{discrete pattern selection} design. Furthermore, we adopt \ac{scnr} and sum-rate as the objective for sensing and communication in subsequent optimization process.

		\item For the \ac{sust} scenario, we formulate a \ac{fp} optimization problem and apply Dinkelbach's transform to convert it into a convex form. A \emph{closed-form solution} is derived for the fully digital beamforming case, and the corresponding \ac{hbf} solutions are obtained via \ac{mo}. For the \ac{em}-domain beamforming of \ac{ra}, we first perform joint optimization across all antenna elements using \ac{sdr}, and rigorously prove the tightness of the SDR solution. To further reduce computational complexity, we propose an antenna-wise \ac{ao} strategy and derive corresponding closed-form updates for the arbitrary pattern generation model. For the discrete pattern selection model, we traverse all candidate radiation patterns per antenna to obtain the optimal pattern configuration.
		
		\item Without altering the fundamental optimization framework, we extend the proposed design to \ac{mumt} scenarios. Since the objective function becomes a sum-of-ratios form and is no longer directly amenable to Dinkelbach's method, we adopt a Lagrangian dual transform combined with a modified quadratic transform to decouple communication and sensing fractional components. These transforms yield solvable first-order optimality conditions for each user’s beamformer, enabling closed-form fully digital solutions and hybrid approximations similar to the \ac{sust} case. The \ac{em} beamforming for \ac{ra} can be handled in a similar manner. For the arbitrary pattern generation model, either \ac{sdr}-based joint optimization or low-complexity alternating closed-form updates can be employed, while for discrete pattern selection model, only alternating traversal over the finite radiation-pattern set is considered.
	\end{itemize}

Our simulation results demonstrate that, compared to traditional \ac{hbf} schemes with non-reconfigurable antennas, the proposed \ac{ra}-based \ac{thbf} approaches, including both arbitrary pattern generation and discrete pattern selection models, achieve significantly better objective function values in both \ac{sust} and \ac{mumt} scenarios. From the communication sum rate and sensing \ac{scnr} trade-off curves, it can be seen that the proposed method strikes a better trade-off relationship between communication and sensing, substantially enhancing the flexibility of \ac{isac} systems. Furthermore, the provided transmit patterns clearly illustrate that, both at the single-antenna radiation pattern level and the array beampattern level, the \ac{ra}-based schemes designed in this paper are more focused on the communication \acp{cu} and sensing targets. Finally, by examining the objective function values versus RF chain and antenna numbers, we show that the proposed RA-based schemes can achieve the performance of conventional fully digital and large-array systems while utilizing fewer RF chains and antennas, thus significantly reducing hardware overhead.
	
	\emph{Notations:}
	In this paper, we use lower-case letters, lower-case bold letters, and capital bold letters to denote scalars, vectors and matrices respectively. The operators for vectorization, transpose, conjugate, conjugate transpose, inverse, and Moore–Penrose inverse are denoted by  ${\rm vec}\left(\cdot\right)$, $\left(\cdot\right)^{\text{T}}$, $\left(\cdot\right)^{\text{H}}$, $\left(\cdot\right)^{\text{*}}$, $\left(\cdot\right)^{-1}$, and $\left(\cdot\right)^{\dag}$  respectively. $\text{Tr}\left(\mathbf{A}\right)$ and $\text{Rank}\left(\mathbf{A}\right)$ stand for the trace and rank of matrix $\mathbf{A}$. The $i$th element of vector $\mathbf{a}$ is $[\mathbf{a}]_i$, the $i$th column of matrix $\mathbf{A}$ is $\left[\mathbf{A}\right]_{:,i}$, and the $(i,j)$th element of matrix $\mathbf{A}$ is $[\mathbf{A}]_{i,j}$. $|a|$, $\|\mathbf{a} \|_2$, $\|\mathbf{A} \|_2$, and $\|\mathbf{A} \|_F$ respectively represent the modulus of scalar $a$, $\ell$-2 norm of vector $\mathbf{a}$, induced 2-norm of matrix $\mathbf{A}$, and Frobenius norm of matrix $\mathbf{A}$. Given a scalar $a$, $a!$ denotes the factorial operation on $a$. {\color{black}$\mathbb{H}^+$ denotes the Hermitian semipositive definite matrix set. The union of sets $\mathcal{A}$ and $\mathcal{B}$ is represented as $\mathbf{A}\cup \mathcal{B}$. The set $\mathcal{A}$ with the element $a$ removed is represented as $\mathcal{A} \setminus a$.} The operations for extracting the imaginary part and real part of a complex variable are denoted by $\Im\left(\cdot\right)$ and $\Re\left(\cdot\right)$ respectively. $\otimes$ represents the Kronecker product. Moreover, $\jmath = \sqrt{-1}$ is the imaginary unit. $\mathbf{I}_N$ denotes the $N$-dimensional unit array. Considering an optimization problem with respect to $x$, the optimal solution is denoted by $x^{\star}$.

	\section{System Model}

	\begin{figure}[t]
		\includegraphics[scale=0.7]{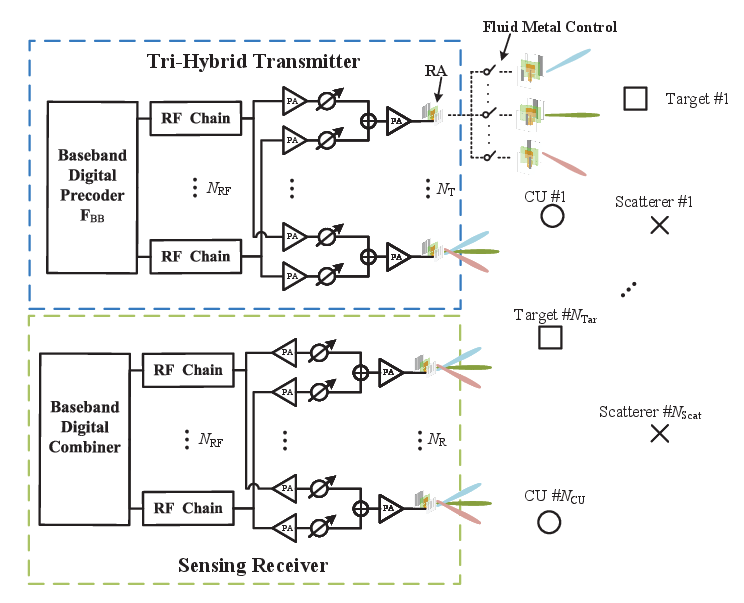}
		\vspace{-1mm}
		\caption{System model of the proposed \ac{isac} system with \ac{ra}.}\label{systemmodel}
		\vspace{-3mm}
	\end{figure}
	
	As illustrated in Fig. \ref{systemmodel}, we consider a downlink  \ac{miso} \ac{isac} system consisting of a transmitter equipped with an $N_{\rm T}$-element \ac{upa}, an $N_{\rm R}$-element UPA radar receiver. For the communication part, we consider $N_{\rm CU}$ single-antenna \acp{cu} located at elevation and azimuth angle pairs $(\theta^{\rm cu}_{i}, \phi^{\rm cu}_{i})$. For the sensing part, we consider $N_{\rm Tar}$ and $N_{\rm Scat}$ point-like sensing targets and scatterers, positioned at $(\theta^{\rm Tar}_i, \phi^{\rm Tar}_i)$ and $(\theta^{\rm Scat}_i, \phi^{\rm Scat}_i)$. The transmitter adopts a \ac{thbf} architecture with $N_{\rm RF}$  \ac{rf} chains and $N_{\rm T}$ \acp{ra}. For the sensing functionality, the receiver is assumed to have a symmetric architecture of the same type. Exploiting this transceiver symmetry, we restrict the optimization to the transmit-side beamforming and keep the receive-side beamforming fixed. When desired, the receive combiner can be optimized in a similar alternating fashion following the same methodology as for the transmitter.

	\subsection{Transmission Model}
	First, the transmitting signal vector $\mathbf{x} \in \mathbb{C}^{N_{\rm T}}$ over the $N_{\rm T}$ antennas is formulated as
	\begin{equation}
	\mathbf{x} = \mathbf{F}_{\rm RF} \mathbf{F}_{\rm BB} \mathbf{s},
	\end{equation} 
	where 	$\mathbf{s} \in \mathbb{C}^{N_{\rm CU}}$ denotes the  transmitting data symbol vector for communication, satisfying $\mathbb{E}[\mathbf{s}\mathbf{s}^{\rm H}] = \mathbf{I}_{N_{\rm CU}}$. Matrices $\mathbf{F}_{\rm RF} \in \mathbb{C}^{N_{\rm T} \times N_{\rm RF}}$ and $\mathbf{F}_{\rm BB} \in \mathbb{C}^{N_{\rm RF} \times N_{\rm CU}}$ represent the analog and digital beamformer, respectively. The analog beamformer is implemented by a fully-connected phase-shifter network with unit-modulus constraints as $[\mathbf{F}_{\rm RF}] = 1, \forall i,\;j$. The analog and digital beamforming needs to obey a total transmitting power budget $P_{\rm T}$ as
	\begin{equation}
	 \left\| \mathbf{F}_{\rm RF} \mathbf{F}_{\rm BB} \right\|_{\rm F}^2 \leq P_{\rm T}. \notag
	\end{equation}

	Let  $\mathbf{H}_{{\rm t}_i}$, $\mathbf
	{H}_{{\rm s}_j}$, and $\mathbf{h}_{{\rm c}_k}$ denote the channels relating to the $i$th target, the $j$th scatterer, and $k$th \ac{cu}. The signals received by the $k$th \ac{cu} and the sensing receiver can be expressed as
	\begin{equation}\label{reccom}
	{y}_{{\rm c}_k} = \mathbf{h}_{{\rm c}_k}^{\rm H}  \mathbf{F}_{\rm RF} [\mathbf{F}_{\rm BB} ]_{:,k} \mathbf{s} + \sum_{j \neq k} \mathbf{h}_{{\rm c}_k}^{\rm H}  \mathbf{F}_{\rm RF} [\mathbf{F}_{\rm BB} ]_{:,j}\mathbf{s} + n_k,
	\end{equation}
	\begin{equation}\label{recsen}
	\mathbf{y}_{\rm r} = ( \sum_{i = 1}^{N_{\rm Tar}} \mathbf{H}_{{\rm t}_i}^{\rm H} + \sum_{j = 1}^{N_{\rm Scat}} \mathbf{H}_{{\rm s}_j}^{\rm H} )\mathbf{x} + \mathbf{n}_{\rm r},
	\end{equation}
	where $n_k$ is the \ac{awgn} at the $k$th \ac{cu}, obeying $\mathcal{CN}(0,\sigma_{\rm n}^2)$, and $\mathbf{n}_{\rm r}$ denotes the \ac{awgn} vector at the radar receiver, obeying $\mathcal{CN}(0,\sigma_{\rm n}^2\mathbf{I}_{N_{\rm R}})$.

	In the conventional \ac{hbf} framework, we optimize the analog beamformer $\mathbf{F}_{\rm RF}$ and digital beamformer $\mathbf{F}_{\rm BB}$ for the improvement of transmission rate. In contrast, the reconfigurability of transmit antennas brings additional \ac{dofs} to change channels $\mathbf{H}_{{\rm t}_i}$, $\mathbf
	{H}_{{\rm s}_j}$, $\mathbf{h}_{{\rm c}_k}$. In the following subsection, two \ac{ra} channel models are introduced.
	The first one is an arbitrary pattern generation model based on spherical harmonic theory, while the second one is a discrete pattern selection model derived from measured data in \cite{pingjunAntData}.
	
	Since the sensing receiver is also assumed to employ a \ac{thbf} architecture, its overall combining vector for the $i$th target is written as
	\begin{equation}
		\mathbf{w}_i^{\rm H}
		=
		\mathbf{w}_{{\rm BB},i}^{\rm H}
		\mathbf{W}_{{\rm RF}}^{\rm H}
		\mathbf{W}_{{\rm EM}}^{\rm H},
	\end{equation}
where $\mathbf{W}_{{\rm EM}}$, $\mathbf{W}_{{\rm RF}}$, and $\mathbf{w}_{{\rm BB},i}$ denote the \ac{em}, analog, and digital combiners for the $i$th target, respectively. Accordingly, the post-combining sensing signal associated with the $i$th target is given by
\begin{equation}\label{recsen_combined}
	\tilde y_{{\rm r},i}
	=
	 \mathbf{w}_i^{\rm H} ( \sum_{i = 1}^{N_{\rm Tar}} \mathbf{H}_{{\rm t}_i}^{\rm H} + \sum_{j = 1}^{N_{\rm Scat}} \mathbf{H}_{{\rm s}_j}^{\rm H} )\mathbf{x} + 	 \mathbf{w}_i^{\rm H} \mathbf{n}_{\rm r}.
\end{equation}

	\subsection{\ac{ra} Channel Model} \label{rachanmodel}
	According to \cite{pingjunJournal}, under far-field conditions, a general channel model applicable to arbitrary antenna types and array shapes can be expressed as
	\begin{equation}
		\mathbf{h} = \sqrt{ \frac{N_{\rm T}}{L}}\sum_{l = 1}^{L} {\alpha}_{l} \cdot \mathbf{g} \odot \mathbf{a}_l^{N_{\rm T}}, 
	\end{equation}
	where $L$ is the path number, $\alpha_{l}$ denotes the corresponding complex channel gain \cite{RCS}. The \ac{agv} $\mathbf{g}\in \mathbb{C}^{N_{\rm T}}$ and a \ac{upa}-based \ac{arv} $\mathbf{a}^{N_{\rm T}}_{l}\in \mathbb{C}^{N_{\rm T}}$ are defined as
	    \begin{equation}
		\mathbf{g} = [G^{(1)}\left(\theta,\phi\right),\,\dots,G^{(N_{\rm T})}\left(\theta,\phi\right)]^{\rm T},
	\end{equation}
	    \begin{equation}
		\begin{aligned}
			\mathbf{a}^{N_{\rm T}}_{l}\left(\theta_{l},\phi_{l}\right) = e^{-\jmath 2 \pi / \lambda [0:N_{\rm T}^x-1] d_x \sin \theta_{l} \cos \phi_{l} } \otimes \\
			e^{-\jmath 2 \pi/\lambda [0:N_{\rm T}^y-1] d_y \sin \theta_{l} \sin \phi_{l} },
		\end{aligned}
	\end{equation}
where $N_{\rm T} = N_{\rm T}^x N_{\rm T}^y$, $N_{\rm T}^x$ and $N_{\rm T}^y$ are the antenna number along $x$ and $y$ axis, $d_x$ and $d_y$ are the corresponding antenna element spacing.
	\subsubsection{\ac{ra} Model I: Arbitrary Pattern Generation}\label{ChanModelI}
	In this part, an \ac{ra} model is introduced using spherical harmonic theory. Focusing on the specific expression of the \ac{agv}, according to \cite{RA1}, the radiation pattern of the $n$th \ac{ra} can be decomposed into an infinite summation of spherical harmonics as follows
	\begin{equation}\label{NoAppG}
		G^{(n)} = G^{(n)}\left(\theta,\phi\right) = \sum_{u=0}^{+\infty} \sum_{q = -u}^{u} c_{uq}^{(n)}Y_u^{q}\left(\theta,\phi\right),
	\end{equation}
	where $u$ represents the degree of the spherical harmonic function, $q$ denotes the order of the spherical harmonic function. $c_{uq}^{(n)}$ is the corresponding harmonic coefficient, $Y_u^{q}$ denotes the spherical harmonics, given by
	\begin{equation}
		Y_u^q(\theta,\phi) = N_u^{q}
		P_u^{q}(\cos\theta)\; e^{\jmath {q} \phi},
		\; u \ge 0,\; -u \le q \le u ,
	\end{equation}
	where $N_u^{q} = \sqrt{\frac{2u + 1}{4 \pi} \frac{(u-|q|)!}{(u+|q|)!}}$ is the normalization factor, $P_u^{q}(\cos\theta)$ represents the associated Legendre function of the $u$th degree and $q$th order. To make the analysis tractable, we follow the truncation operation in \cite{RA1} with a window length $T = U^2 + 2U + 1$. Therefore, (\ref{NoAppG}) can be approximated as
	\begin{equation}\label{AppG}
		G^{(n)}(\theta,\phi)\approx\sum_{u=0}^{U}\sum_{q=-u}^{u}c_{uq}^{(n)}Y_{u}^{q}(\theta,\phi)=\sum_{t=1}^{T} \dot{c}_{t}^{(n)} \dot {Y}_{t}(\theta,\phi),
	\end{equation}
	where $\dot{c}^{(n)}_t = c_{uq}^{(n)}$ and $\dot{Y}_t\left(\theta,\phi\right) = Y_u^q \left(\theta,\phi\right)$ with $t = u^2 + u + q + 1, \, u \in [0,U], \, q \in [-u,u]$. In the rest of this paper, we only consider the truncation signal model under the notation of $\mathbf{\bar c}^{(n)} \triangleq [\dot{c}_{1}^{(n)}, \dot{c}_{2}^{(n)}, \ldots, \dot{c}_{T}^{(n)}]^{\rm T} \in \mathbb{C}^{T}$, and $\mathbf{b}(\theta, \phi) \triangleq [\dot{Y}_{1}(\theta, \phi),  \ldots, \dot{Y}_{T}(\theta, \phi)]^{\rm T} \in \mathbb{C}^{T}$. Thus, (\ref{AppG}) can be rewritten in a matrix form as
	\begin{equation}\label{SynRAI}
	G^{(n)}(\theta,\phi)\approx \mathbf{b}^{\rm T}(\theta, \phi) \mathbf{\bar c}^{(n)}.
	\end{equation}
	According to the energy conservation law in \cite{RA1}, an energy constraint on $\mathbf{\bar c}^{(n)}$ is introduced as 
	\begin{equation}
		\|\mathbf{\bar c}^{(n)}\|_2^2 = 1 , \forall n.
	\end{equation}
	\subsubsection{\ac{ra} Model II: Discrete Pattern Selection}
	The \ac{ra} model based on spherical harmonics in \ref{ChanModelI} is an idealized model. The synthesized radiation patterns are not always physically realizable via practical antenna hardware, thus serving as a performance upper bound for \ac{ra}s. To obtain a more realistic characterization, this part introduces an \ac{ra} model based on measured data reported in \cite{pingjunAntData}. Specifically, there exist $S$ pre-fabricated radiation patterns, represented as $\mathcal{G} = \left\{ {{{\tilde G}_1},{{\tilde G}_2}, \cdots ,{{\tilde G}_S}} \right\}$. Each element in $\mathcal{G}$ is actually a function with respect to $\theta$ and $\phi$. For given angle pair $(\theta,\phi)$, we define 
	\begin{equation}
		{\mathbf{\tilde b}}\left( {\theta ,\phi } \right) = {\left[ {{{\tilde G}_1}\left( {\theta ,\phi } \right),{{\tilde G}_2}\left( {\theta ,\phi } \right), \cdots ,{{\tilde G}_S}\left( {\theta ,\phi } \right)} \right]^{\text{T}}}.
	\end{equation} 
	Similar to (\ref{SynRAI}), a radiation pattern can be synthesized as
	\begin{equation}
	{G^{(n)}}(\theta ,\phi ) = {{{\mathbf{\tilde b}}}^{\text{T}}}(\theta ,\phi ){{{\mathbf{\tilde c}}}^{(n)}},
	\end{equation}
	where ${{{\mathbf{\tilde c}}}^{(n)}}$ is a binary selection vector for the $n$th \ac{ra}. Mathematically, this constraint can be formulated as
	\begin{equation}\label{selecon}
		{{{\mathbf{\tilde c}}}^{(n)}} \in \left\{ {{\mathbf{c}} \in {{\left\{ {0,1} \right\}}^S}|{{\left\| {\mathbf{c}} \right\|}_0} = 1} \right\},\;\forall n.
	\end{equation} 
	Additionally, the power constraint on the radiation patterns of the \ac{ra} is inherently satisfied, since this constraint is explicitly captured in the process of generating the radiation-pattern dataset $\mathcal{G}$.

\subsubsection{Communication Channel Model}
Based on Section \ref{rachanmodel}, the communication channel vector toward the $k$th \ac{cu} can be further formulated in a compact format as
\begin{equation}
	  \mathbf{h}_{{\rm c}_k}  = \mathbf{F}_{\rm EM}^{\rm H} \mathbf{h}_{{\rm c}_k}^{\rm EM} = \sqrt{\frac{N_{\rm T}}{L_k}} \sum_{l=1}^{L_k} \mathbf{F}_{\rm EM}^{\rm H} \mathbf{h}_{{\rm c}_{k,l}}^{\rm EM},
\end{equation}
where $L_k$ denotes the number of paths associated with $k$th \ac{cu}, and the $l$th \ac{aod} pair for the $k$th \ac{cu} is represented as $(\theta_{{\rm c}_{k,l}},\phi_{{\rm c}_{k,l}})$. Furthermore, taking \ac{ra} Model I as an example, $\mathbf{F}_{\rm EM}$ is the \ac{em} beamforming matrix, expressed as
\begin{equation}
	\mathbf{F}_{\rm EM} = {\rm blkdiag} \left\{ \mathbf{\bar c}^{(1)}, \, \mathbf{\bar c}^{(2)}, \, \cdots, \, \mathbf{\bar c}^{(N_{\rm T})} \right\}.
\end{equation}
Furthermore, $ \mathbf{h}_{{\rm c}_{k,l}}^{\rm EM}$ is given by
\begin{equation}
	\mathbf{h}_{{\rm c}_{k,l}}^{\rm EM} =  \mathbf{b}\left(\theta_{{\rm c}_{k,l}},\phi_{{\rm c}_{k,l}}\right) \otimes \mathbf{1}_{N_{\rm T}}  \odot \left(	\alpha_{{\rm c}_{k,l}}  \cdot \mathbf{a}^{N_{\rm T}}_{{\rm c}_{k,l}} \otimes \mathbf{1}_{T}\right),
\end{equation}
where $\alpha_{k,l}$ denotes the corresponding complex channel gain \cite{RCS}, defined as
\begin{equation}
	\alpha_{{\rm c}_{k,l}} = \sqrt{\frac{\lambda^2}{(4\pi)^2 (r_{{\rm c}_{k,l}})^{\sigma_{{\rm PL}}}} }e^{\jmath \psi_{{\rm c}_{k,l}}}.
\end{equation}
Here, $\lambda$ denotes the wavelength, $\sigma_{\rm PL}$ is the path loss exponent, $r_{{\rm c}_{k,l}} $ is the propagation distance, and $\psi_{{\rm c}_{k,l}}$ is a random phase shift.

For Model II, the above expressions remain the same, except that the coefficients $T$, $\mathbf{c}^{(n)}$, and $\mathbf{b}$ are replaced by $S$, $\mathbf{\tilde c}^{(n)}$, and $\mathbf{\tilde b}$, respectively.

\subsubsection{Sensing Channel Model}
For the round-trip sensing channel model, taking the $i$th target as an example, the effective sensing channel after transmit and receive \ac{em}-domain processing can be expressed as
\begin{equation}
	\begin{aligned}
	\mathbf{H}_{{\rm t}_i}
	=
	\mathbf{F}_{\rm EM}^{\rm H}
	\mathbf{H}_{{\rm t}_i}^{\rm EM}
	\mathbf{W}_{{\rm EM}}
	=
	\mathbf{F}_{\rm EM}^{\rm H}
	\mathbf{h}_{{\rm t}_i}^{\rm EM}
	\left(	\widetilde{\mathbf{h}}_{{\rm t}_i}^{{\rm EM }}\right)^{\rm H}	\mathbf{W}_{{\rm EM}}.
	\end{aligned}
\end{equation}
The backward channel can be expressed as $\widetilde{\mathbf{h}}_{{\rm t}_i} = \mathbf{W}_{{\rm EM}}^{\rm H} \widetilde{\mathbf{h}}_{{\rm t}_i}^{{\rm EM }}	$, where
\begin{equation}
\widetilde{\mathbf{h}}_{{\rm t}_i}^{{\rm EM }}= \mathbf{b}\left(\theta_{{\rm t}_i},\phi_{{\rm t}_i}\right) \otimes \mathbf{1}_{N_{\rm R}}  \odot \left(\sqrt{{\alpha}_{{\rm t}_i}} \cdot \mathbf{a}_{{\rm t}_i}^{N_{\rm R}} \otimes \mathbf{1}_{T}\right).
\end{equation}
The forward channel is given by $\mathbf{h}_{{\rm t}_i}= \mathbf{F}_{\rm EM}^{\rm H} \mathbf{h}_{{\rm t}_i}^{\rm EM}$, where 
\begin{equation}
	\mathbf{h}_{{\rm t}_i}^{\rm EM} = \mathbf{b}\left(\theta_{{\rm t}_i},\phi_{{\rm t}_i}\right) \otimes \mathbf{1}_{N_{\rm T}}  \odot \left(\sqrt{{\alpha}_{{\rm t}_i} } \cdot \mathbf{a}_{{\rm t}_i}^{N_{\rm T}} \otimes \mathbf{1}_{T}\right).
\end{equation}
$\alpha_{{\rm t}_i}$ denotes the sensing channel gain modeled by the round trip path loss and radar cross-section \cite{RCS} as
\begin{equation}
{\alpha _{{{\text{t}}_i}}} = \sqrt{\frac{{{\lambda ^2}\sigma _{{{\text{t}}_i}}^{{\text{RCS}}}}}{{{{\left( {4\pi } \right)}^3}r_{{{\text{t}}_i}}^4}}}{e^{\jmath{\psi _{{{\text{t}}_i}}}}},
\end{equation}
here ${\sigma _{{{\text{t}}_i}}^{{\text{RCS}}}}$ denotes the radar cross-section, $r_{{\rm t}_i}$ represents the distance of the $i$th target, and $\psi_{{\rm t}_{i}}$ is a random phase shift.

\subsection{Performance Metrics}
Based on the receive communication signal in (\ref{reccom}) and the \ac{ra} channel model, the \ac{sinr} $\gamma_k$ of the $k$th \ac{cu} is formulated as
\begin{equation}\label{comsinr}
{{\gamma} _k} = \frac{{{A_k}}}{{{B_k}}} = \frac{{{{\left| {{\mathbf{h}}_{c_k}^{{\text{EM}}\;{\text{H}}}{{\mathbf{F}}_{{\text{EM}}}}{{\mathbf{F}}_{{\text{RF}}}}{{\left[ {{{\mathbf{F}}_{{\text{BB}}}}} \right]}_{:,k}}} \right|}^2}}}{{\sum\limits_{j \ne k} {{{\left| {{\mathbf{h}}_{c_k}^{{\text{EM}}\;{\text{H}}}{{\mathbf{F}}_{{\text{EM}}}}{{\mathbf{F}}_{{\text{RF}}}}{{\left[ {{{\mathbf{F}}_{{\text{BB}}}}} \right]}_{:,j}}} \right|}^2} + \sigma _{\text{n}}^2} }}.
\end{equation}
Accordingly, the communication sum rate is given by
\begin{equation}
	{R_{{\rm c}}} = \sum\limits_{k = 1}^K R_k =\sum\limits_{k = 1}^K {{{\log }_2}\left( {1 + {{\gamma}_{{{{k}}}}}} \right)}.
\end{equation}

Since joint transmitter-receiver optimization is usually difficult under the \ac{thbf} architecture, we adopt a sequential design strategy that first optimizes the transmit-side beamforming with the receive beamformer fixed. Such a transmitter-first design approach has been widely used in the \ac{hbf} field as an effective way to reduce the complexity of joint transceiver optimization \cite{HBFYu}. Thus, the receive \ac{scnr} for the $i$th target can be expressed as
\begin{equation}\label{sensinr}
\begin{aligned}
	{\eta _i} = \frac{{{C_i}}}{{{D_i}}} &= \frac{{\left\| {{\mathbf{w}}_i^{\text{H}}{\mathbf{\tilde h}}_{{{\text{t}}_i}}^{{\text{EM}}}{\mathbf{h}}_{{{\text{t}}_i}}^{{\text{EM}}\;{\text{H}}}{{\mathbf{F}}_{{\text{EM}}}}{{\mathbf{F}}_{{\text{RF}}}}{{\mathbf{F}}_{{\text{BB}}}}} \right\|_2^2}}{{\sum\limits_{\kappa  \in {\mathcal{I}_i}} {\left\| {{\mathbf{w}}_i^{\text{H}}{\mathbf{\tilde h}}_\kappa ^{{\text{EM}}}{\mathbf{h}}_\kappa ^{{\text{EM H}}}{{\mathbf{F}}_{{\text{EM}}}}{{\mathbf{F}}_{{\text{RF}}}}{{\mathbf{F}}_{{\text{BB}}}}} \right\|_2^2}  + \tilde \sigma _{\text{n}}^2}} \hfill \\
	&= \frac{{{\omega _i}\left\| {{\mathbf{h}}_{{{\text{t}}_i}}^{{\text{EM}}\;{\text{H}}}{{\mathbf{F}}_{{\text{EM}}}}{{\mathbf{F}}_{{\text{RF}}}}{{\mathbf{F}}_{{\text{BB}}}}} \right\|_2^2}}{{\sum\limits_{\kappa  \in {\mathcal{I}_i}} {{\omega _\kappa }\left\| {{\mathbf{h}}_\kappa ^{{\text{EM H}}}{{\mathbf{F}}_{{\text{EM}}}}{{\mathbf{F}}_{{\text{RF}}}}{{\mathbf{F}}_{{\text{BB}}}}} \right\|_2^2}  + \tilde \sigma _{\text{n}}^2}} \hfill \\ 
\end{aligned} 
\end{equation}
where $\omega_i = | {{\mathbf{w}}_i^{\text{H}}{\mathbf{\tilde h}}_{{{\text{t}}_i}}^{{\text{EM}}}}|^2$, $\omega_{\kappa} = | {{\mathbf{w}}_i^{\text{H}}{\mathbf{\tilde h}}_\kappa ^{{\text{EM}}}}|^2$, $\tilde{\sigma}_{\rm n}^2 = \sigma_{\rm n}^2 \|\mathbf{w}_i\|^2$. $\mathcal{I}_i$ denotes the interference channel set of the $i$th target, including other targets and all scatterers as
\begin{equation}
	\mathcal{I}_i = \left\{ {\left\{ {{{\text{t}}_{i^{'}}}} \right\}_{i^{'} = 1}^{{N_{{\text{Tar}}}}}\backslash \left\{ {{{\text{t}}_i}} \right\}} \right\} \cup \left\{ {{{\text{s}}_j}} \right\}_{j = 1}^{{N_{{\text{Scat}}}}}.
\end{equation}

\section{Tri-hybrid ISAC Beamforming Design under \ac{sust} Scenarios}
	In this section, we present the \ac{thbf} design for \ac{ra}-\ac{isac} systems under \ac{sust} scenarios, based on both the \ac{ra} models I and II. In this section, since $N_{\rm CU} = 1$, the matrix $\mathbf{F}_{\rm BB}$ is rewritten in vector form as $\mathbf{f}_{\rm BB}$. Furthermore, the indices $k$ and $i$ in (\ref{comsinr}) and (\ref{sensinr}), originally used to distinguish \ac{cu}s and targets, are removed for ease of notation in this section.

	\subsection{ISAC Beamforming Problem Formulation}
	Under the constraints of transmit power budget, antenna pattern power, and unit-modulus of the phase shifters, the baseband digital beamformer $\mathbf{f}_{\rm BB} \in \mathbb{C}^{N_{\rm RF} \times 1}$, analog beamformer $\mathbf{F}_{\rm RF} \in \mathbb{C}^{N_{\rm T} \times N_{\rm RF}}$, and \ac{em} beamformer $\mathbf{F}_{\rm EM} \in \mathbb{C}^{N_{\rm T} (S \; {\rm or}\;T) \times N_{\rm T}}$ are jointly optimized to maximize the weighted sum of the communication \ac{snr} \footnote{For single user case, rate maximization and \ac{snr} maximization is similar in meaning, differing only in scale.} and sensing \ac{scnr} as
	\begin{subequations}\label{optsust}
		\begin{align}
			&\mathcal{P}_{1}:\;\mathop {\max }\limits_{{\mathbf{F}_{\rm EM}},{\mathbf{F}_{\rm RF}},{\mathbf{f}_{\rm BB}}} \tilde{\beta} {\gamma} + \beta \eta\\
			&{\rm s.t.} \ \ {\rm Tr}\left({{\mathbf{F}}_{{\text{RF}}}}{{\mathbf{f}}_{{\text{BB}}}}{\mathbf{f}}_{{\text{BB}}}^{\text{H}}{\mathbf{F}}_{{\text{RF}}}^{\text{H}}\right) \leq P, \\
			&\ \ \ \ \  \ {\rm RA \;constraints \; w.r.t. } \mathbf{F}_{\rm EM},
		\end{align}%
	\end{subequations}
	where $\tilde{\beta}$ and $\beta$ are trade-off factors between communication and sensing, satisfying $\tilde{\beta} + \beta = 1$. 
	To make this problem tractable, \ac{ao} is employed between the \ac{hbf} and the \ac{em} beamforming. 
	
	\subsection{Optimizing $\mathbf{F}_{\rm RF}$ and $\mathbf{f}_{\rm BB}$ with given $\mathbf{F}_{\rm EM}$}
	For a given \ac{em} beamformer, Problem (\ref{optsust}) becomes
	\begin{subequations}\label{optsustbf}
		\begin{align}
			&\mathcal{P}_{1.1}:\;\mathop {\max }\limits_{{\mathbf{F}_{\rm RF}},{\mathbf{f}_{\rm BB}}} \tilde{\beta} {\gamma} + \beta \eta\\
			&{\rm s.t.} \ \left\| {{{\mathbf{F}}_{{\text{RF}}}}{{\mathbf{f}}_{{\text{BB}}}}} \right\|_2^2 \leqslant P,\\
			&\ \ \ \ \     |[\mathbf{F}_{\rm RF}]_{i,j}| = 1, \forall i,j.
		\end{align}
	\end{subequations}
	To make this problem tractable, we first consider its corresponding digital beamforming problem by putting $\mathbf{F}_{\rm RF}$ and $\mathbf{f}_{\rm BB}$ together as $\mathbf{f}_{\rm FD} \triangleq \mathbf{F}_{\rm RF} \mathbf{f}_{\rm BB} $ and dropping the unit-modulus constraint temporarily. Then, Dinkelbach's transform is  employed here to reformulate the single-ratio \ac{fp}. First, we note that the fractional term $\eta$ in the objective function is not quasi-concave, as its numerator is concave while the denominator is convex. This structural property generally makes the resulting fractional programming problem analytically intractable and challenging to solve directly. To address this, we present the following proposition, which shows that the global optimal solution of $\mathbf{f}_{\rm FD}$ can still be efficiently obtained via Dinkelbach's transform despite the non-quasi-concavity of $\eta$.
	\begin{proposition}\label{Pro0}
	The global optimal solution for $\mathbf{f}_{\rm FD}$ can be achieved via Dinkelbach's transform, even though $\eta$ is not quasi-concave
	\end{proposition}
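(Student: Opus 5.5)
The plan is to pass through an exact epigraph reformulation. We will show that, for every fixed value of the auxiliary (Dinkelbach) variable, the remaining nonconvex quadratic subproblem in $\mathbf{f}_{\rm FD}$ has zero duality gap. This hidden convexity is what makes the lack of quasi-concavity of $\eta$ harmless.

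We first write everything as Hermitian quadratic forms. With $N_{\rm CU}=1$ we have $\gamma=\mathbf{f}_{\rm FD}^{\rm H}\mathbf{C}\mathbf{f}_{\rm FD}/\sigma_{\rm n}^2$, where $\mathbf{C}=\mathbf{h}_{\rm c}\mathbf{h}_{\rm c}^{\rm H}$. Similarly $\eta=\mathbf{f}_{\rm FD}^{\rm H}\mathbf{A}\mathbf{f}_{\rm FD}/(\mathbf{f}_{\rm FD}^{\rm H}\mathbf{B}\mathbf{f}_{\rm FD}+\tilde\sigma_{\rm n}^2)$. Here $\mathbf{A}=\omega\,\mathbf{h}_{\rm t}\mathbf{h}_{\rm t}^{\rm H}\succeq\mathbf{0}$ and $\mathbf{B}=\sum_{\kappa\in\mathcal{I}}\omega_\kappa\mathbf{h}_\kappa\mathbf{h}_\kappa^{\rm H}\succeq\mathbf{0}$, with the channels $\mathbf{h}=\mathbf{F}_{\rm EM}^{\rm H}\mathbf{h}^{\rm EM}$ fixed because $\mathbf{F}_{\rm EM}$ is given.

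Next we introduce $y$ and rewrite $\mathcal{P}_{1.1}$ (without the unit-modulus constraint) exactly as
\begin{equation}
\max_{\mathbf{f},\,y\ge 0}\ \tilde\beta\gamma(\mathbf{f})+\beta y \quad {\rm s.t.}\quad \mathbf{f}^{\rm H}(\mathbf{A}-y\mathbf{B})\mathbf{f}\ge y\tilde\sigma_{\rm n}^2,\ \ \|\mathbf{f}\|_2^2\le P. \notag
\end{equation}
This is an exact equivalence, since at the optimum the first constraint is active and then $y=\eta$. The partial Lagrangian of this problem with respect to the first constraint is precisely the Dinkelbach surrogate $\tilde\beta\gamma+\beta(N-yD)$ with $N-yD=\mathbf{f}^{\rm H}(\mathbf{A}-y\mathbf{B})\mathbf{f}-y\tilde\sigma_{\rm n}^2$, up to the scaling of the multiplier. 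Hence the Dinkelbach update $y\leftarrow\eta(\mathbf{f})$ coincides with the KKT/fixed-point condition of this reformulation.

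The key step is to fix $y$ and consider the inner problem in $\mathbf{f}$. It maximizes the quadratic form $\mathbf{f}^{\rm H}\mathbf{C}\mathbf{f}$ subject to two quadratic constraints over $\mathbb{C}^{N_{\rm T}}$. We will lift it to an SDP in $\mathbf{X}=\mathbf{f}\mathbf{f}^{\rm H}$ and invoke the complex rank-one decomposition result of Huang--Palomar: a complex SDP with at most three linear constraints always has a rank-one optimal solution. Equivalently, one can use the complex S-lemma with two constraints. Either way the SDR is tight, so the inner problem is solved globally for every $y$, regardless of the indefiniteness of $\mathbf{A}-y\mathbf{B}$.

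It then remains to handle the outer problem. The optimal inner value $V(y)$ is a scalar function on the bounded interval $[0,\lambda_{\max}(\mathbf{B}^{-1}\mathbf{A})]$ (or $[0,\|\mathbf{A}\|P/\tilde\sigma_{\rm n}^2]$ if $\mathbf{B}$ is singular). We will show that the Dinkelbach sequence $y_k$ is monotonically nondecreasing in the overall objective and bounded, so it converges. We will then show that its limit satisfies the KKT conditions of the equivalent problem with zero duality gap, and is therefore globally optimal.

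The main obstacle is this last link. For a pure ratio, Dinkelbach's global-optimality argument relies only on the sign of $F(y)=\max(N-yD)$, but here the extra additive term $\tilde\beta\gamma$ breaks that argument. If the monotonicity proof does not go through cleanly, the fallback is to establish global optimality directly from the tight inner problem plus a one-dimensional search over $y$. The Dinkelbach iteration then serves only as an efficient way to locate the optimal $y$.
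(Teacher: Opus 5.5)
Your main line breaks at the claim that the partial Lagrangian is the Dinkelbach surrogate ``up to the scaling of the multiplier,'' and hence that $y\leftarrow\eta(\mathbf{f})$ reproduces the KKT conditions of your epigraph problem. Stationarity in $y$ gives $\beta=\nu D(\mathbf{f})$ with $D(\mathbf{f})=\mathbf{f}^{\rm H}\mathbf{B}\mathbf{f}+\tilde\sigma_{\rm n}^2$. So the multiplier on $N-yD$ is $\beta/D(\mathbf{f})$, and its size relative to the fixed weight $\tilde\beta$ on $\gamma$ is exactly what matters.

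KKT points of the reformulation (equivalently, stationary points of $F=\tilde\beta\gamma+\beta\eta$ on $\|\mathbf{f}\|_2^2=P$) are eigenvectors of $\tilde\beta\mathbf{C}/\sigma_{\rm n}^2+\bigl(\beta/D(\mathbf{f})\bigr)(\mathbf{A}-y\mathbf{B})$ with $y=\eta(\mathbf{f})$. Fixed points of your Dinkelbach iteration, which is the paper's iteration with $\varsigma=\beta y$, are instead principal eigenvectors of $\tilde\beta\mathbf{C}/\sigma_{\rm n}^2+\beta(\mathbf{A}-y\mathbf{B})$. These coincide only if $D(\mathbf{f})=1$ or $\tilde\beta=0$, and no sharper convergence argument can repair this.

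For a concrete failure, take $\mathcal{I}=\emptyset$, so $\mathbf{B}=\mathbf{0}$ and $D\equiv\tilde\sigma_{\rm n}^2$, and let $\mathbf{h}_{\rm c}\perp\mathbf{h}_{\rm t}$. The iteration steers along $\mathbf{h}_{\rm t}$ iff $\beta\omega\|\mathbf{h}_{\rm t}\|_2^2>\tilde\beta\|\mathbf{h}_{\rm c}\|_2^2/\sigma_{\rm n}^2$. The true maximizer does so iff $\beta\omega\|\mathbf{h}_{\rm t}\|_2^2/\tilde\sigma_{\rm n}^2>\tilde\beta\|\mathbf{h}_{\rm c}\|_2^2/\sigma_{\rm n}^2$. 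For $\tilde\sigma_{\rm n}^2\neq1$ there are gains for which the Dinkelbach output is strictly suboptimal, and for non-orthogonal channels it is generically not even stationary.

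Two further steps also do not go through. The standard Dinkelbach inequality only gives $F_{k+1}-F_k\ge\beta\bigl(1-D(\mathbf{f}_{k+1})\bigr)(\eta_{k+1}-\eta_k)$, which has no definite sign, so your monotonicity claim is unsupported. And even with the correct multiplier, ``KKT plus zero duality gap of the inner problem'' would not give global optimality. Tightness at each fixed $y$ says nothing about the joint problem in $(\mathbf{f},y)$, whose coupling constraint is bilinear in $y$ and $\mathbf{f}\mathbf{f}^{\rm H}$.

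Your fallback is correct and is what survives. The epigraph form is exact. For each fixed $y$ the lifted inner problem has two linear constraints over a compact set, so the Huang--Palomar rank-one result (or the complex S-lemma) makes it tight. Since the inner optimal value is nonincreasing in $y$, a grid of spacing $\delta$ over the bounded interval is within $\beta\delta$ of the global optimum. But this proves a different statement, namely global optimality via a direct scalar search. By the example above, the Dinkelbach update cannot be relied on to ``locate the optimal $y$.''

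For comparison, the paper lifts to $\mathbf{R}_{\rm FD}=\mathbf{f}_{\rm FD}\mathbf{f}_{\rm FD}^{\rm H}$, notes that the numerator and denominator of $\eta$ become affine, applies Dinkelbach, and invokes rank-one tightness of an SDP with a single trace constraint. That certifies only that each Dinkelbach subproblem is solved globally by the principal eigenvector, which Rayleigh--Ritz already gives. It never treats the additive $\tilde\beta\gamma$ term either, so the obstacle you identified is genuine and is not resolved there. The statement does hold for $\tilde\beta=0$, where Dinkelbach needs only global solvability of the parametric subproblem, not quasi-concavity; so non-quasi-concavity of $\eta$ was never the real difficulty. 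For $\tilde\beta>0$, what can actually be proved is your fallback.
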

		\begin{proof}
		See Appendix \ref{AppdixA} for the proof.
	\end{proof}
	
	Therefore, the optimization problem \ac{wrt} $\mathbf{f}_{\rm FD}$ becomes
	
	\begin{subequations}\label{optsustfdbf}
		\begin{align}
			\mathcal{P}_{1.2}:\;&\mathop {\max }\limits_{{\mathbf{f}_{\rm FD}}} {\text{ }}{\mathbf{f}}_{{\text{FD}}}^{\text{H}}{\mathbf{R}}{{\mathbf{f}}_{{\text{FD}}}} -  \varsigma \tilde\sigma _{\text{n}}^2 \\
			&{\rm s.t.} \ \ \left\| {{{\mathbf{f}}_{{\text{FD}}}}} \right\|_2^2 \leqslant P,
		\end{align}%
	\end{subequations}
	where $\mathbf{R}$ is the channel covariance matrix, given by
	\begin{equation}\label{RallmodelI}
		{\mathbf{R}} = \tilde \beta {{\mathbf{R}}_{{\text{EM,c}}}}/\sigma _{\text{n}}^2 +  \beta \omega {{{\mathbf{R}}_{{\text{EM,t}}}} - \varsigma \sum \limits_{\kappa \in \mathcal{I}} \omega_{\kappa} {{\mathbf{R}}_{{\text{EM,}\kappa}}}},
	\end{equation}
	where ${{\mathbf{R}}_{{\text{EM,c}}}}$, ${{\mathbf{R}}_{{\text{EM,t}}}}$, and ${{\mathbf{R}}_{{\text{EM,}\kappa}}}$ are defined in the same form as
	\begin{equation}\label{RmodelI}
		{{\mathbf{R}}_{{\text{EM},\kappa}}} = {\left( {{\mathbf{h}}_{\kappa}^{{\text{EM}}\;{\text{H}}}{{\mathbf{F}}_{{\text{EM}}}}} \right)^{\text{H}}}{\mathbf{h}}_{\kappa}^{{\text{EM}}\;{\text{H}}}{{\mathbf{F}}_{{\text{EM}}}}.
	\end{equation}
	$\varsigma$ is an auxiliary variable introduced by Dinkelbach’s transform \cite{Dinkelbach} for $\eta$, with its optimal solution iterating as 
	\begin{equation}\label{lambda1}
		\varsigma ^ \star  = \beta \omega {\mathbf{f}}_{{\text{FD}}}^{{\text{H}}}{{\mathbf{R}}_{{\text{EM,t}}}}{\mathbf{f}}_{{\text{FD}}}/\left( {{\mathbf{f}}_{{\text{FD}}}^{{\text{H}}}\sum\limits_{\kappa \in \mathcal{I}}  \omega_{\kappa}{{{\mathbf{R}}_{{\text{EM,}}{{\kappa }}}}} {\mathbf{f}}_{{\text{FD}}}+ \tilde \sigma _{\text{n}}^2} \right),
	\end{equation}
	where $\mathcal{I}$ is the interference channel set for the target.

	Since the power budget of $\mathbf{f}_{\rm FD}$ should always be fully exploited to enhance the objective function, the optimal solution of $\mathcal{P}_{1.2}$ can be iterated through the Rayleigh–Ritz method as 
	\begin{equation}\label{fdsolution}
		\mathbf{f}_{\rm FD}^{\star} = \sqrt{P} \mathbf{u},
	\end{equation}
	where $\mathbf{u}$ is the eigenvector corresponding to the maximum eigenvalue of $\mathbf{R}$. Finally, the optimal fully digital beamformer $\mathbf f_{\rm FD} $ is approximated by a hybrid structure $ \mathbf F_{\rm RF}\mathbf f_{\rm BB} $ through the following problem:
	\begin{subequations}
		\begin{align}
			&\mathop {\min }\limits_{\mathbf{F}_{\rm RF},\;\mathbf{f}_{\rm BB}} \|\mathbf{F}_{\rm RF} \mathbf{f}_{\rm BB} - \mathbf{f}_{\rm FD}\|_F^2,\\
			&{\rm{s}}{\rm{.t}}{\rm{.}}\; \left|\left[\mathbf{F}_{\rm RF}\right]_{i,j}\right| = 1,\; \forall i,\;j.
		\end{align}
		\label{MO}%
	\end{subequations}
	The details of solving this hybrid approximation problem can be found in \cite{HBFApp1,HBFApp2} via \ac{mo}, which is omitted here.
	\subsection{Optimizing $\mathbf{F}_{\rm EM}$ with given $\mathbf{F}_{\rm RF}$ and $\mathbf{f}_{\rm BB}$}\label{sustEM}
	\subsubsection{RA Model I}
	The next step is to optimize the EM beamformer with given $\mathbf{f}_{\rm FD}$ in (\ref{fdsolution}). Firstly, the objective can be rewritten with respect to $\mathbf{\bar c} \triangleq [{{{\mathbf{\bar c}}^{(1)\;{\text{T}}}}}, \; \cdots, {{{\mathbf{\bar c}}^{(N_{\rm T})\;{\text{T}}}}}]^{\rm T}$ as
	\begin{equation}\label{objfunc_c}
		{f_{{\text{obj}}}} = \frac{{\tilde \beta }}{{\sigma _{\text{n}}^2}}{{{\mathbf{\bar c}}}^{\text{H}}}{\mathbf{A}}_{{\text{EM,c}}}^{}{\mathbf{\bar c}} + \frac{{ \beta \omega {{{\mathbf{\bar c}}}^{\text{H}}}{\mathbf{A}}_{{\text{EM,t}}}^{}{\mathbf{\bar c}}}}{{\sum\limits_{\kappa \in \mathcal{I}} \omega_{\kappa} {{{{\mathbf{\bar c}}}^{\text{H}}}{{\mathbf{A}}_{{\text{EM,}}{\kappa}}}{\mathbf{\bar c}} +\tilde\sigma _{\text{n}}^2} }}.
	\end{equation}
	Using $\kappa$ to represent the channel type, $\mathbf{A}_{{\rm EM},\kappa} = \mathbf{a}_{{\rm EM},\kappa} \mathbf{a}^{\rm H}_{{\rm EM},\kappa}$, and  $\mathbf{a}_{{\rm EM},\kappa}$ can be expressed as
	\begin{equation}\label{eq:arrangement}
		{\mathbf{a}}_{{\text{EM},{\kappa}}}^{} = {\left[ {\begin{array}{*{20}{c}}
					{{\mathbf{a}}_{{\text{EM},{\kappa}}}^{(1)\;{\text{T}}}}&{{\mathbf{a}}_{{\text{EM},{\kappa}}}^{(2)\;{\text{T}}}}& \cdots &{{\mathbf{a}}_{{\text{EM},{\kappa}}}^{({N_{\text{T}}})\;{\text{T}}}} 
			\end{array}} \right]^{\text{T}}},
	\end{equation}
	where
	\begin{equation}
		{\mathbf{a}}_{{\text{EM,}\kappa}}^{(n)} = {\left[ {{\mathbf{f}}^{*}_{{\text{FD}}}} \right]_n}{\left[ {{\mathbf{h}}_{\kappa}^{{\text{EM}}}} \right]_{\left( {n - 1} \right)T + 1:nT}}.  
	\end{equation}

	Following Proposition \ref{Pro0}, Dinkelbach’s transform can be directly applied here without loss of optimality (see Appendix \ref{AppdixB} for the proof). Therefore, the optimization problem for \ac{em} beamforming can be formulated as
	\begin{subequations}\label{optsustem}
	\begin{align}
		&\mathcal{P}_{1.3}:\;\mathop {\max }\limits_{{\mathbf{\bar c }}} {{{\mathbf{\bar c}}}^{\text{H}}}{\mathbf{A}} {\mathbf{\bar c}} -  \varsigma \tilde\sigma _{\text{n}}^2 \\
		&{\rm s.t.} \ \ \|\mathbf{\bar c}^{(n)}\|_2^2 = 1 , \forall n,
	\end{align}
	\end{subequations}
	where the definition of coefficient matrix $\mathbf{A}$ in this problem is given by
	\begin{equation}
		{\mathbf{A}} = \frac{{\tilde \beta }}{{\sigma _{\text{n}}^2}}{\mathbf{A}}_{{\text{EM,c}}}^{} +  \beta \omega{\mathbf{A}}_{{\text{EM,t}}}^{} - \varsigma \sum\limits_{\kappa \in \mathcal{I}} \omega_{\kappa} {{{\mathbf{A}}_{{\text{EM,}}\kappa}}} .
	\end{equation}

	Different from $\mathcal{P}_{1.2}$ in (\ref{optsustfdbf}), we cannot adopt the Rayleigh-Ritz method here since the feasible set is a Cartesian product of unit spheres (a product manifold). Fortunately, \ac{sdr} can be adopted here. Defining $\mathbf{\bar C} \triangleq \mathbf{\bar c}\mathbf{\bar c}^{\rm H}$, the optimization problem after \ac{sdr} can be reformulated as 
	\begin{subequations}\label{optsustemsdr}
		\begin{align}
			\mathcal{P}_{1.4}:\;&\mathop {\max }\limits_{{\mathbf{\bar C}}}  {\rm Tr}\left(\mathbf{A {\bar C}}\right)-  \varsigma \tilde \sigma _{\text{n}}^2 \\
			&{\rm s.t.} \ \  {\rm Tr}\left(\mathbf{P}_n\mathbf{\bar C}\right) = 1,\; \forall n,\\
			&\;\;\;\;\;\;\;\mathbf{\bar C} \in \mathbb{H}^{+},
		\end{align}
	\end{subequations}
	where $\mathbf{P}_n$ is a block diagonal matrix, with only the $n$th block being the identity matrix $\mathbf{I}_T$. 
	$\varsigma$ is an auxiliary variable introduced by Dinkelbach's transform as before, with its optimal solution iterating as
	\begin{equation}\label{lambda2}
		\varsigma ^ \star  = \beta \omega {\text{Tr}}\left( {{\mathbf{A}}_{{\text{EM,t}}}^{}{{\mathbf{\bar C}}}} \right)/\left( {\sum\limits_{\kappa \in \mathcal{I}} \omega_{\kappa} {{\text{Tr}}\left( {{\mathbf{A}}_{{\text{EM,}}\kappa}^{}{{\mathbf{\bar C}}}} \right) + \tilde\sigma _{\text{n}}^2} } \right).
	\end{equation}

	
	Solving the \ac{sdp} problem suffers from high computational complexity. Nevertheless, we show that it can be significantly reduced by introducing an \ac{ao} across different antennas. For the $n$th antenna, the objective function can be rewritten with respect to $\mathbf{c}^{(n)}$ as shown in (\ref{objfunc_c_sep}),
	\begin{figure*}
		\begin{equation}\label{objfunc_c_sep}
		{f_{{\text{obj}}}}\left( {{{\mathbf{c}}^{(n)}}} \right) \! = \! \underbrace {\frac{{\tilde \beta }}{{\sigma _n^2}}\left( {{{\mathbf{c}}^{(n){\text{H}}}}{\mathbf{A}}_{{\text{EM}},1}^{(n)}{{\mathbf{c}}^{(n)}} \! + \! 2\Re \{ {{\mathbf{c}}^{(n){\text{H}}}}{\mathbf{b}}_{{\text{EM}},1}^{(n)}\}  \! + \! d_{{\text{EM}},1}^{(n)}} \right)}_{{\text{Communication Sum - Rate Part}}} \!+\! \underbrace {\frac{{\beta \omega \left( {{{\mathbf{c}}^{(n){\text{H}}}}{\mathbf{A}}_{{\text{EM}},t}^{(n)}{{\mathbf{c}}^{(n)}} \! + \! 2\Re \{ {{\mathbf{c}}^{(n){\text{H}}}}{\mathbf{b}}_{{\text{EM}},t}^{(n)}\}  \! + \! d_{{\text{EM}},t}^{(n)}} \right)}}{{\sum\limits_{\kappa  \in {\mathcal{I}}}^{{N_{{\text{Scat}}}}} \omega_{\kappa} [ {{{\mathbf{c}}^{(n){\text{H}}}}{\mathbf{A}}_{{\text{EM}},\kappa }^{(n)}{{\mathbf{c}}^{(n)}} \! + \! 2\Re \{ {{\mathbf{c}}^{(n){\text{H}}}}{\mathbf{b}}_{{\text{EM}},\kappa }^{(n)}\}  \! + \! d_{{\text{EM}},\kappa }^{(n)} ]  \! + \! \tilde\sigma _n^2} }}}_{{\text{Sensing SCNR Part}}},{\text{ }}
	\end{equation}
	\begin{equation}\label{DefA}
		{\mathbf{A}^{(n)}} = \frac{{\tilde \beta }}{{\sigma _n^2}}\left[ {\begin{array}{*{20}{c}}
				{{\mathbf{A}}_{{\text{EM}},1}^{(n)}}&{{\mathbf{b}}_{{\text{EM}},1}^{(n)}} \\ 
				{{\mathbf{b}}_{{\text{EM}},1}^{(n)\;{\text{H}}}}&{d_{{\text{EM}},1}^{(n)}} 
		\end{array}} \right] + \beta \omega \left[ {\begin{array}{*{20}{c}}
				{{\mathbf{A}}_{{\text{EM}},t}^{(n)}}&{{\mathbf{b}}_{{\text{EM}},t}^{(n)}} \\ 
				{{\mathbf{b}}_{{\text{EM}},t}^{(n)\;{\text{H}}}}&{d_{{\text{EM}},t}^{(n)}} 
		\end{array}} \right] - \varsigma \sum\limits_{\kappa \in \mathcal{I}}  \omega_{\kappa} {\left[ {\begin{array}{*{20}{c}}
					{{\mathbf{A}}_{{\text{EM}},\kappa}^{(n)}}&{{\mathbf{b}}_{{\text{EM}},\kappa}^{(n)}} \\ 
					{{\mathbf{b}}_{{\text{EM}},\kappa}^{(n)\;{\text{H}}}}&{d_{{\text{EM}},\kappa}^{(n)}} 
			\end{array}} \right]}.
	\end{equation}
	\begin{equation}\label{lambda3}
	\varsigma _{i + 1}^ \star  = \beta \omega {\mathbf{\bar c}}_1^{(n)\;{\text{H}}}\left[ {\begin{array}{*{20}{c}}
			{{\mathbf{A}}_{{\text{EM}},t}^{(n)}}&{{\mathbf{b}}_{{\text{EM}},t}^{(n)}} \\ 
			{{\mathbf{b}}_{{\text{EM}},t}^{(n)\;{\text{H}}}}&{d_{{\text{EM}},t}^{(n)}} 
	\end{array}} \right]{\mathbf{\bar c}}_1^{(n)}/\left( {{\mathbf{\bar c}}_1^{(n)\;{\text{H}}}\sum\limits_{\kappa \in \mathcal{I} } \omega_{\kappa} {\left[ {\begin{array}{*{20}{c}}
					{{\mathbf{A}}_{{\text{EM}},\kappa}^{(n)}}&{{\mathbf{b}}_{{\text{EM}},\kappa}^{(n)}} \\ 
					{{\mathbf{b}}_{{\text{EM}},\kappa}^{(n)\;{\text{H}}}}&{d_{{\text{EM}},\kappa}^{(n)}} 
			\end{array}} \right]} {\mathbf{\bar c}}_1^{(n)} + \tilde\sigma _{\text{n}}^2} \right),
	\end{equation}
	\hrulefill
	\end{figure*}
	where ${{\mathbf{A}}^{(n)}_{{\rm EM},\kappa} } \triangleq {\mathbf{a}}_{{\rm EM}, \kappa} ^{(n)}{({\mathbf{a}}_{{\rm EM}, \kappa}^{(n)})^{\text{H}}}$, ${{\mathbf{b}}^{(n)}_{{\rm EM}, \kappa}} \triangleq {\mathbf{a}}_{{\rm EM}, \kappa} ^{(n)}s_{{\rm EM}, \kappa}^{(n)}$, ${d^{(n)}_{{\rm EM}, \kappa}} \triangleq |{s^{(n)}_{{\rm EM}, \kappa} }{|^2}$, and ${s^{(n)}_{{\rm EM}, \kappa}} \triangleq \sum\limits_{k \ne n} ( {\mathbf{a}}_{{\text{EM}},\kappa }^{(k)}{)^{\text{H}}}{{\mathbf{\bar c}}^{(k)}}$. 
	By adopting Dinkelbach's transform and defining ${\mathbf{\bar c}}_{\rm ext}^{(n)} = {\left[ {\begin{array}{*{20}{c}}
				{{{\mathbf{\bar c}}^{(n)\;{\text{T}}}}}&1 
		\end{array}} \right]^{\text{T}}}$, the optimization problem can be reformulated with respect to $\mathbf{\bar c}_{\rm ext}^{(n)}$ 
	\begin{subequations}\label{optsustemsep}
		\begin{align}
			\mathcal{P}_{1.5} \;&\mathop {\max }\limits_{{\mathbf{\bar c}}_{\rm ext}^{(n)}} {\mathbf{\bar c}}_{\rm ext}^{(n)\;{\text{H}}}{\mathbf{A}^{(n)}   \mathbf{ \bar c}}_{\rm ext}^{(n)} - \varsigma \tilde\sigma _n^2 \hfill \\
			&{\text{s}}{\text{.t}}{\text{.}} \ \ {{\mathbf{e}}^{\text{T}}}{\mathbf{\bar c}}_{\rm ext}^{(n)} = 1,\\
			&\;\;\;\;\;\;\|\mathbf{\bar c}_{\rm ext}^{(n)}\|_2^2 = 2,
		\end{align} 
	\end{subequations}
	where $\mathbf{e} = [0,\;0,\;\cdots,\;1]^{\rm T}$. The definition of $\mathbf{A}^{(n)}$ is given in (\ref{DefA}). $\varsigma$ can be updated as shown in (\ref{lambda3}). To solve this problem, we first give its Lagrangian function as
	\begin{equation}
		\begin{gathered}
			\mathcal{L} = {\mathbf{\bar c}}_{\rm ext}^{(n)\;{\text{H}}}{\mathbf{A}^{(n)}\mathbf{\bar c}}_{\rm ext}^{(n)} - \mu_1 ({\mathbf{\bar c}}_{\rm ext}^{(n)\;{\text{H}}}{\mathbf{\bar c}}_{\rm ext}^{(n)} - 2) - \\
			\mu_2 ({{\mathbf{e}}^{\text{H}}}{\mathbf{\bar c}}_{\rm ext}^{(n)} - 1) - {\mu_2 ^*}({\mathbf{\bar c}}_{\rm ext}^{(n)\;{\text{H}}}{\mathbf{e}} - 1),
		\end{gathered}
	\end{equation}
	where $\mu_1$ and $\mu_2$ are the Lagrangian multipliers for constraints. Solving the Wirtinger gradient for ${\mathbf{\bar c}}_1^{(n)}$ yields 
	\begin{equation}
		\frac{{\partial \mathcal{L}}}{{\partial {\mathbf{\bar c}}_{\rm ext}^{(n)\;*}}} = {\mathbf{A}^{(n)}\mathbf{\bar c}}_{\rm ext}^{(n)} - \mu_1 \mathbf{\bar c}_{\rm ext}^{(n)} - {\mu_2^*}{\mathbf{e}} .
	\end{equation}
	Setting the gradient to zero yields the optimal solution as
	\begin{equation}\label{eq:closedformcn}
		{\mathbf{\bar c}}_{\rm ext}^{(n)\; \star } = \frac{{{{\left( {{{\mathbf{A}}^{(n)}} - {\mu _1}\mathbf{I}} \right)}^{ - 1}}{\mathbf{e}}}}{{{{\mathbf{e}}^{\text{T}}}{{\left( {{{\mathbf{A}}^{(n)}} - {\mu _1}\mathbf{I}} \right)}^{ - 1}}{\mathbf{e}}}}.
	\end{equation}
	
	\begin{proposition}\label{Pro1}
		\normalfont
		Let $\left\{ {{\varpi _1},{\varpi _2}, \cdots ,{\varpi _{T + 1}}} \right\}$ denote the  eigenvalues of $\mathbf{A}^{(n)}$ in an increasing order.
		There exists a unique optimal $\mu_1^{\star}$ over the interval $(\varpi_{T+1},+\infty)$, which can be obtained via binary search.
	\end{proposition}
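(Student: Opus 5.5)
We plan to reduce the choice of $\mu_1$ to a scalar secular equation and show that it has exactly one root to the right of the largest eigenvalue $\varpi_{T+1}$. Write the eigendecomposition $\mathbf{A}^{(n)}=\mathbf{U}\,{\rm diag}(\varpi_1,\dots,\varpi_{T+1})\,\mathbf{U}^{\rm H}$ and set $\mathbf{v}=\mathbf{U}^{\rm H}\mathbf{e}$. For $\mu_1\notin\{\varpi_j\}$, the candidate (\ref{eq:closedformcn}) automatically satisfies the linear constraint $\mathbf{e}^{\rm T}\mathbf{\bar c}_{\rm ext}^{(n)}=1$. The only remaining condition is the norm constraint $\|\mathbf{\bar c}_{\rm ext}^{(n)}\|_2^2=2$. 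In the eigenbasis this reads
\begin{equation*}
g(\mu_1)\triangleq \frac{\sum_{j=1}^{T+1}|v_j|^2/(\varpi_j-\mu_1)^2}{\left(\sum_{j=1}^{T+1}|v_j|^2/(\varpi_j-\mu_1)\right)^2}=2 .
\end{equation*}

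\textbf{Step 1: optimality forces $\mu_1>\varpi_{T+1}$.} Take a stationary point $\mathbf{x}$ of $\mathcal{P}_{1.5}$ with multiplier $\mu_1$. For any feasible direction $\mathbf{d}$, meaning $\mathbf{e}^{\rm T}\mathbf{d}=0$ and $\|\mathbf{x}+\mathbf{d}\|^2=2$, the change in the objective is $\mathbf{d}^{\rm H}(\mathbf{A}^{(n)}-\mu_1\mathbf{I})\mathbf{d}$. This follows the standard argument for trust-region or sphere-constrained quadratics, as in Moré–Sorensen or the generalized trust region subproblem. It shows that a global maximizer needs $\mathbf{A}^{(n)}-\mu_1\mathbf{I}$ to be negative semidefinite on the subspace $\mathbf{e}^\perp$. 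To get the full interval we would instead use a Lagrangian dual argument. Since $\mathcal{P}_{1.5}$ is equivalent to the homogeneous QCQP with the two constraints $|\mathbf{e}^{\rm T}\mathbf{x}|^2=1$ and $\|\mathbf{x}\|^2=2$, the S-lemma (complex, two constraints) gives zero duality gap. The dual-optimal $\mu_1$ must make the dual function finite, which requires $\mathbf{A}^{(n)}-\mu_1\mathbf{I}-\nu\mathbf{e}\mathbf{e}^{\rm T}\preceq 0$. In the generic case $v_{T+1}\neq 0$, the point $\mu_1=\varpi_{T+1}$ is excluded because the inverse in (\ref{eq:closedformcn}) would blow up while the norm stays finite. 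Together these give $\mu_1^\star>\varpi_{T+1}$. We expect Step 1 to be the main obstacle. What we would try first is the simpler claim that, among all roots of $g(\mu_1)=2$, the objective value $\mu_1\cdot 2+2\Re\{\mu_2\}$ is increasing in $\mu_1$, so the largest root is optimal. We would then show that this largest root lies in $(\varpi_{T+1},\infty)$.

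\textbf{Step 2: existence and uniqueness on $(\varpi_{T+1},\infty)$.} Substitute $t_j=1/(\mu_1-\varpi_j)>0$. Then $g=\sum_j|v_j|^2t_j^2\big/\left(\sum_j|v_j|^2t_j\right)^2$, with every $t_j$ positive on this interval. For the limits:
\begin{itemize}
\item As $\mu_1\to\infty$, all $t_j\approx 1/\mu_1$, so $g\to 1/\|\mathbf{v}\|^2=1<2$.
\item As $\mu_1\downarrow\varpi_{T+1}$, $t_{T+1}$ dominates, so $g\to 1/|v_{T+1}|^2$. This exceeds $2$ whenever $|v_{T+1}|^2<1/2$. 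If $v_{T+1}=0$, we take the limit over the next eigenvalue that has a nonzero component instead.
\end{itemize}
Continuity then gives a root in the interval. For uniqueness we show $g$ is strictly decreasing there. Define the weighted distribution $p_j\propto|v_j|^2t_j$. Then $g=\mathbb{E}_p[t]\big/\sum_j|v_j|^2t_j$, and differentiating with $\dot t_j=-t_j^2$ gives
\begin{equation*}
g'(\mu_1)\propto -\Big(2\,\mathbb{E}[t^3]\,\mathbb{E}[t]-2\,\mathbb{E}[t^2]^2\Big),
\end{equation*}
with expectations taken under $|v_j|^2$. This bracket is nonnegative by Cauchy–Schwarz, $\mathbb{E}[t^2]^2\le\mathbb{E}[t]\,\mathbb{E}[t^3]$. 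Equality would require all $t_j$ to be equal, which cannot happen for distinct eigenvalues.

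\textbf{Step 3: computing the root.} Monotonicity of $g$ on $(\varpi_{T+1},\infty)$ justifies bisection for $\mu_1^\star$. The bracketing interval is $[\varpi_{T+1}+\epsilon,\,\mu_{\max}]$, where $\mu_{\max}$ is chosen so that $g(\mu_{\max})<2$.
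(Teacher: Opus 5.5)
\paragraph{Verdict.}
Your proposal has a genuine gap, and it sits where you anticipated: in Step~1, and in the existence half of Step~2. It cannot be closed, because the proposition as stated is false.

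\paragraph{Where Step~1 fails.}
You infer $\mu_1^\star>\varpi_{T+1}$ from $\mathbf{A}^{(n)}-\mu_1\mathbf{I}-\nu\mathbf{e}\mathbf{e}^{\rm T}\preceq\mathbf{0}$, but this does not follow. Since $\nu$ is free, that condition only restricts the leading $T\times T$ block of $\mathbf{A}^{(n)}=\begin{bmatrix}\mathbf{A}_{11}&\mathbf{b}\\ \mathbf{b}^{\rm H}&d\end{bmatrix}$. It essentially amounts to $\mu_1\ge\lambda_{\max}(\mathbf{A}_{11})$, and interlacing only gives $\lambda_{\max}(\mathbf{A}_{11})\le\varpi_{T+1}$, which is the wrong direction. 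Likewise, your own limit in Step~2 yields a root in $(\varpi_{T+1},\infty)$ only when $|v_{T+1}|^2<1/2$, and you never establish that.

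\paragraph{Counterexample.}
Take $\mathbf{A}^{(n)}=\mathbf{u}\mathbf{u}^{\rm H}$ with $\mathbf{u}=[\mathbf{a}^{\rm T},\sigma]^{\rm T}$. Up to a positive factor, this is exactly the form of (\ref{DefA}) in the communication-only case $\beta=0$, with $\mathbf{a}=\mathbf{a}^{(n)}_{{\rm EM},1}$ and $\sigma=(s^{(n)}_{{\rm EM},1})^*$. The objective is $|\mathbf{a}^{\rm H}\mathbf{\bar c}^{(n)}+\sigma^*|^2$. It is maximized by $\mathbf{\bar c}^{(n)}=\mathbf{a}\sigma^*/(\|\mathbf{a}\|_2|\sigma|)$, with multiplier $\mu_1^\star=\|\mathbf{a}\|_2^2+\|\mathbf{a}\|_2|\sigma|$, whereas $\varpi_{T+1}=\|\mathbf{a}\|_2^2+|\sigma|^2$. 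Hence $\mu_1^\star<\varpi_{T+1}$ whenever $\|\mathbf{a}\|_2<|\sigma|$. This is the expected regime, since $s^{(n)}_{{\rm EM},1}$ aggregates the other $N_{\rm T}-1$ antennas.

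Concretely, take $T=1$ and $\mathbf{A}^{(n)}=\begin{bmatrix}1&2\\2&4\end{bmatrix}$. The optimum is $\bar c^{(n)}=1$ with $\mu_1^\star=3<\varpi_2=5$. For $\mu_1>5$ one has $\|\mathbf{\bar c}_{\rm ext}^{(n)}(\mu_1)\|_2^2=1+4/(\mu_1-1)^2<5/4<2$. So the norm equation has no root in $(\varpi_2,\infty)$ at all; in your notation, $|v_2|^2=4/5>1/2$.

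\paragraph{Comparison with the paper's proof.}
The paper's proof does not supply the missing idea; it has the same hole in a less visible form. It asserts that if $\mu_1^\star<\varpi_{T+1}$, a positive eigenvalue of $\mathbf{A}^{(n)}-\mu_1^\star\mathbf{I}$ gives a feasible ascent direction. But that eigenvector need not lie in the tangent space $\{\mathbf{e}^{\rm T}\delta\mathbf{c}=0,\ \Re\{\mathbf{c}^{\star\rm H}\delta\mathbf{c}\}=0\}$. You were right that global optimality only gives negative semidefiniteness on $\mathbf{e}^\perp$.

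Your monotonicity argument is the same derivative computation as (\ref{eq:normcderivative}). Your Cauchy--Schwarz step $\mathbb{E}[t^2]^2\le\mathbb{E}[t]\,\mathbb{E}[t^3]$ is exactly the justification the paper leaves implicit when it attributes the sign to $\mathbf{Q}\preceq\mathbf{0}$.

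\paragraph{What is actually true.}
Wherever (\ref{eq:closedformcn}) is defined, it equals $[-((\mathbf{A}_{11}-\mu_1\mathbf{I})^{-1}\mathbf{b})^{\rm T},\,1]^{\rm T}$. So the apparent singularities at the $\varpi_j$ are removable, and
\[
\|\mathbf{\bar c}_{\rm ext}^{(n)}(\mu_1)\|_2^2=1+\|(\mathbf{A}_{11}-\mu_1\mathbf{I})^{-1}\mathbf{b}\|_2^2 .
\]
This decreases strictly from $+\infty$ to $1$ on $(\lambda_{\max}(\mathbf{A}_{11}),\infty)$ when $\mathbf{b}$ is not orthogonal to the top eigenspace of $\mathbf{A}_{11}$. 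By the standard trust-region (Mor\'e--Sorensen) characterization, its unique root there is the multiplier of the global maximizer. The bisection should therefore be run on $(\lambda_{\max}(\mathbf{A}_{11}),\infty)$, not on $(\varpi_{T+1},\infty)$.
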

	\begin{proof}
	See Appendix \ref{AppdixC} for the proof.
	\end{proof}

	\subsubsection{\ac{ra} Model II}
	For Model II, we recall that the structure of the RA channel remains unchanged, with only $T$, $\mathbf{\bar c}^{(n)}$, and $\mathbf{b}$ replaced by $S$, $\mathbf{\tilde c}^{(n)}$, and $\mathbf{\tilde b}$, respectively. As a result, the digital beamforming optimization framework remains the same as that of Model I. Accordingly, the channel covariance matrix in (\ref{RallmodelI}) remains unchanged as
	\begin{equation}
		{\mathbf{R}} = \tilde \beta {{\mathbf{R}}_{{\text{EM,c}}}}/\sigma _{\text{n}}^2 +  \beta \omega {{{\mathbf{R}}_{{\text{EM,t}}}} - \varsigma \sum \limits_{\kappa \in \mathcal{I}} \omega_{\kappa} {{\mathbf{R}}_{{\text{EM,}\kappa}}}},
	\end{equation}
	where ${\mathbf{R}}_{{\text{EM,c}}}$, ${\mathbf{R}}_{{\text{EM,t}}}$, and ${\mathbf{R}}_{{\text{EM,}\kappa}}$ keep the same forms as in Model I equation (\ref{RmodelI}) as
	\begin{equation}
		{{\mathbf{R}}_{{\text{EM},\kappa}}} = {\left( {{\mathbf{h}}_{\kappa}^{{\text{EM}}\;{\text{H}}}{{\mathbf{ F}}_{{\text{EM}}}}} \right)^{\text{H}}}{\mathbf{h}}_{\kappa}^{{\text{EM}}\;{\text{H}}}{{\mathbf{ F}}_{{\text{EM}}}},
	\end{equation}
	except that the \ac{em} beamforming matrix $\mathbf{F}_{\rm EM}$ is now constructed from the discrete radiation-state vector $\tilde{\mathbf c}$ in Model II. Therefore, the auxiliary variable $\varsigma$ and optimal digital beamformer $\mathbf{f}_{\rm FD}$ is updated iteratively as
	\begin{equation}\label{lambda1_II}
		\varsigma ^ \star  = \beta \omega {\mathbf{f}}_{{\text{FD}}}^{{\text{H}}}{{\mathbf{R}}_{{\text{EM,t}}}}{\mathbf{f}}_{{\text{FD}}}/\left( {{\mathbf{f}}_{{\text{FD}}}^{{\text{H}}}\sum\limits_{\kappa \in \mathcal{I}}  \omega_{\kappa}{{{\mathbf{R}}_{{\text{EM,}}{{\kappa }}}}} {\mathbf{f}}_{{\text{FD}}}+ \tilde \sigma _{\text{n}}^2} \right),
	\end{equation}
	
	\begin{equation}\label{fdsolution_II}
		\mathbf{f}_{\rm FD}^{\star} = \sqrt{P} \mathbf{u}.
	\end{equation}

	To tackle the integer constraints in (\ref{selecon}) in a low-complexity manner, we first conduct an \ac{ao} across different antennas. Then, the optimal solution $\mathbf{\tilde c}^{(n)\;\star}$ can be obtained by evaluating the objective in (\ref{objfunc_c}) over all possible $S$ candidates. Because the coefficient vector $\mathbf{\tilde c}^{(n)}$ is 1-sparse (only one nonzero entry), evaluating the metric for a fixed candidate pattern only involves accessing one element from each channel block and accumulating over the ${{N_{{\text{Scat}}}}}$ interference terms. 
	As a result, the per-pattern computation is lightweight, making exhaustive traversal over the $S$ \ac{ra} candidates computationally affordable in this step.
	With Model II, we alternately optimize fully digital and \ac{em} beamforming in the outer layer, while alternately optimizing the \ac{em} beamforming for each antenna in the inner layer, ultimately converging upon a local optimum.
	
	\subsection{Overall Algorithm and Complexity Analysis}
	\begin{algorithm}[t]
		\color{black}
		\caption{Algorithm of solving $\mathcal{P}_{1}$ in (\ref{optsust}).}\label{sustalg}
		\begin{algorithmic}[1]
			\STATE {\bf Initialize} 
			Initial feasible solution: 
			$\mathbf{F}_{\rm RF}$ $\mathbf{f}_{\rm BB}$, $\mathbf{c}^{(n)} $, and $\varsigma$.
			\REPEAT 
			\REPEAT
			\STATE  Calculate the optimal fully digital beamformer according to (\ref{fdsolution});\\
			\STATE Update $\varsigma$ according to (\ref{lambda1}); \\
			\UNTIL  $\mathcal{P}_{1.2}$ in (\ref{optsustfdbf}) converges;\\ 
			\STATE Solve \ac{em} beamforming according to Alg. \ref{embfalg}; \\ 
			\UNTIL  $\mathcal{P}_{1}$ in (\ref{optsust}) converges; \\
			\STATE Decompose the fully digital beamformer into analog and digital one according to \cite{HBFApp1,HBFApp2}. 
			\ENSURE $\mathbf{F}_{\rm EM}^{\star}$, $\mathbf{F}_{\rm RF}^{\star}$, and $\mathbf{f}_{\rm BB}^{\star}$.\\
		\end{algorithmic}
	\end{algorithm}
	
	\begin{algorithm}[t]
		\color{black}
		\caption{Algorithm of Solving $\mathcal{P}_{1.3}$ in (\ref{optsustem}).}\label{embfalg}
		\begin{algorithmic}[1]
			\STATE {\bf Initialize} Index: $n = 1$.  Initial feasible solution:  $\mathbf{\bar c}$ or $\mathbf{\tilde c}$ ;
			\IF{Joint Optimization under Model I}
			\REPEAT
				\STATE Solve $\mathcal{P}_{1.4}$ in (\ref{optsustemsdr}) for $\mathbf{\bar c}^{\star}$; \\
				\STATE Update $\varsigma$ according to (\ref{lambda2}); \\
			\UNTIL $\mathcal{P}_{1.4}$ in (\ref{optsustemsdr}) converges;
			\ELSIF{Alternating Optimization}
			\REPEAT
				\REPEAT 
					\IF{Model I}
					\STATE Use (\ref{eq:closedformcn}) for the optimal EM beamformer $\mathbf{\bar c}^{(n)}$;
					\ELSIF{Model II}
					\STATE Select the optimal EM beamformer $\mathbf{\tilde c}^{(n)}$ to maximize (\ref{objfunc_c});
					\ENDIF
					\STATE Update $\varsigma$ according to (\ref{lambda3}); \\
				\UNTIL  $\mathcal{P}_{1.5}$ in (\ref{optsustemsep}) converges; \\
				\STATE $n \leftarrow 1 + (n\; {\bmod}\; N_{\rm T})$;
			\UNTIL $\mathcal{P}_{1.3}$ in (\ref{optsustem}) converges;\\
			\ENDIF 
			\ENSURE $\mathbf{\bar c}^{\star}$ or $\mathbf{\tilde c}^{\star}$ .\\
		\end{algorithmic}
	\end{algorithm}

	
	In this part, the proposed joint \ac{thbf} optimization algorithm for $\mathcal{P}_{1}$ in (\ref{optsust}) is summarized step by step in Alg.~\ref{sustalg}, while the detailed implementation of the \ac{em} beamformer design is separately presented in Alg.~\ref{embfalg}.
	The computational complexity of $\mathcal{P}_{1.2}$ in (\ref{optsustfdbf}) mainly arises from the eigenvalue decomposition in (\ref{fdsolution}), with complexity order $\mathcal{O}(N_{\rm T}^3)$. The complexity of \ac{em} beamforming under Model I is $\mathcal{O}(((N_{\rm T}T)^2+N_{\rm T})^{3.5})$ for the joint \ac{sdr} design, whereas the proposed antenna-wise \ac{ao} scheme reduces it to $\mathcal{O}(T^3)$ for each antenna. For \ac{ra} Model II, the complexity of exhaustively scanning all $S$ candidate patterns for one antenna is $\mathcal{O}(N_{\rm Scat}S)$. These results are summarized in Table~\ref{tab:complexity_summary}.
	
	\begin{table}[ht]
		\centering
		\caption{Computational complexity under the \ac{sust} scenario.}
		\label{tab:complexity_summary}
		\setlength{\tabcolsep}{4pt}
		\begin{tabular}{l l}
			\toprule
			\textbf{Module} & \textbf{Order} \\
			\midrule
			FD beamforming (\(\mathcal P_{1.2}\)) & \(\mathcal O(N_{\rm T}^3)\) \\
			EM beamforming, Model I (joint \ac{sdr}) & \(\mathcal {O}(((N_{\rm T}T)^2+N_{\rm T})^{3.5})\) \\
			EM beamforming, Model I (\ac{ao}) & \(N_{\rm T}\mathcal O(T^3)\) \\
			RA beamforming, Model II (exhaustive) &   \(N_{\rm T}\mathcal O(N_{\rm Scat}S)\) \\
		\bottomrule
	\end{tabular}
	\end{table}
	
	\section{Extension to \ac{mumt} Scenarios }
	In this section, we extend the \ac{thbf} design of \ac{ra}-\ac{isac} systems from the \ac{sust} scenario to general \ac{mumt} scenarios, based on both \ac{ra} models I and II. Firstly, the optimization problem can be formulated as follows
	\begin{subequations}\label{optmumt}
		\begin{align}
			&\mathcal{P}_{2}:\;\mathop {\max }\limits_{{\mathbf{F}_{\rm EM}},{\mathbf{F}_{\rm RF}},{\mathbf{f}_{\rm BB}}} \tilde{\beta} R_{c} + \beta \sum_{i=1}^{N_{\rm Tar}} \eta_i\\
			&{\rm s.t.} \ \  \left\| {{{\mathbf{F}}_{{\text{RF}}}}{{\mathbf{F}}_{{\text{BB}}}}} \right\|_F^2 \leqslant P, \\
			&\ \ \ \ \  \ {\rm RA \;constraints \; w.r.t. } \mathbf{F}_{\rm EM}.
		\end{align}
	\end{subequations}
	
	\subsection{Objective Function Transformation via \ac{fp}}
	The main challenge to solve this problem is the  non-convex objective function with sum-of-ratios form. According to \cite{FP}, Dinkelbach's method mentioned above is only applicable to single-ratio problems. Therefore, we consider the multi-ratio \ac{fp} framework proposed in \cite{FP}.
	Firstly, using Lagrangian dual transform given in \cite{FP}, we have a more tractable objective  as
	\begin{equation}\label{LagObjFun}
		{f_{{\text{Lag}}}} =
		\sum\limits_{k = 1}^{{N_{{\text{CU}}}}} \tilde \beta \left[ {a_k + \frac{{\left( {1 + {\tilde \gamma _k}} \right){A_k}}}{{{A_k} + {B_k}}}} \right] + \sum\limits_{i = 1}^{{N_{{\text{Tar}}}}} {\beta \frac{{{C_{i}}}}{{{D_i}}}}  ,
	\end{equation}
	where $a_k = {\log \left( {1 + {\tilde \gamma _k}} \right) - {\tilde \gamma _k}}  $ and $\tilde \gamma_k$ is introduced as an auxiliary variable for each ratio. $A_k/B_k$, $A_k, \; B_k, \; C_i, \; D_i $ have already been defined in (\ref{comsinr}) and (\ref{sensinr}). Following the same procedure as in the \ac{sust} scenario, we first combine the analog beamforming and digital beamforming into an unconstrained fully digital beamformer $\mathbf{F}_{\rm FD}$, with its $k$th column being $\mathbf{f}_{{\rm FD},k} = \mathbf{F}_{\rm RF} [\mathbf{F}_{\rm BB}]_{:,k}$. At this point, $C_i$ and $D_i$ can be written as summations over $\mathbf{f}_{{\rm FD},k}$, as shown below
	\begin{equation}
	{C_i} = \sum\limits_{k = 1}^{{N_{{\text{CU}}}}} {{C_{i,k}} = } \omega_i \sum\limits_{k = 1}^{{N_{{\text{CU}}}}} {\left| {{\mathbf{h}}_{{{\text{t}}_i}}^{{\text{EM}}\;{\text{H}}}{{\mathbf{F}}_{{\text{EM}}}}{{\mathbf{f}}_{{\text{FD}},k}}} \right|_2^2},
	\end{equation}
	\begin{equation}
	{D_i} = \sum\limits_{j = 1}^{{N_{{\text{CU}}}}} {\sum\limits_{\kappa  \in \mathcal{I}_i} \omega_{\kappa}{\left| {{\mathbf{h}}_\kappa ^{{\text{EM H}}}{{\mathbf{F}}_{{\text{EM}}}}{{\mathbf{f}}_{{\text{FD}},j}}} \right|_2^2} }  + \tilde\sigma _{\text{n}}^2.
	\end{equation}
	Substituting the above two equations into (\ref{LagObjFun}) and applying the quadratic transform in \cite{FP}, the objective function can be further reformulated as
	\begin{flalign}\label{QuaObjFun}
	&{f_{{\text{qua}}}} \! = \! \sum\limits_{k = 1}^{N_{\rm CU}}  \left\{ {\tilde \beta } a_k \! + \! 2\Re \left( {p_k^*\sqrt {\tilde \beta (1 \! + \! {\tilde\gamma _k}){A_k}} } \right)  \! - \! {\left| {{p_k}} \right|^2}\left( {{A_k} \! + \! {B_k}} \right)  \right. \notag \hfill \\
	& \left. + \sum\limits_{i = 1}^{{N_{{\text{Tar}}}}} \left[{2\Re \left( {q_{i,k}^*\sqrt {\beta {C_{i,k}}} } \right) - {{\left| {{q_{i,k}}} \right|}^2}{D_i}} \right] \right\}, \hfill 
	\end{flalign}
	where $p_k$ and $q_{i,k}$ are auxiliary variables.

	\subsection{Optimizing $\mathbf{F}_{\rm RF}$ and $\mathbf{F}_{\rm BB}$ with given $\mathbf{F}_{\rm EM}$}
	With the decoupled objective function (\ref{QuaObjFun}) at hand, the optimization for \ac{hbf} can be cast as follows
	\begin{subequations}\label{optmumtfd}
		\begin{align}
			\mathcal{P}_{2.1}:\;&\mathop {\max }\limits_{{\mathbf{f}_{{\rm FD},k}},\tilde\gamma_k,p_k,q_{i,k}}\; f_{\rm qua} \left(\mathbf{f}_{{\rm FD},k},\tilde\gamma_k,p_k,q_{i,k}\right)\\
			&{\rm s.t.} \ \ \sum\limits_{k = 1}^K {{\text{Tr}}\left( {{{\mathbf{f}}_{{\text{FD},k}}}{\mathbf{f}}_{{\text{FD},k}}^{\text{H}}} \right) \leqslant P}.
		\end{align}%
	\end{subequations}
	This problem is convex \ac{wrt} each individual optimization variable. Therefore, we adopt an \ac{ao} approach, and the closed-form optimal solution for each variable is given as follows
	\begin{equation}\label{gammaupdate1}
		\tilde\gamma _k^ \star  = \frac{{{{\left| {{\mathbf{h}}_{{\rm c}_k}^{\text{H}}{{\mathbf{f}}_{{\text{FD,}}k}}} \right|}^2}}}{{\sum\limits_{j \ne k} {{{\left| {{\mathbf{h}}_{{\rm c}_k}^{\text{H}}{{\mathbf{f}}_{{\text{FD,}}j}}} \right|}^2} + \sigma _{\text{n}}^2} }},
	\end{equation}
	\begin{equation}\label{pupdate1}
		p_k^ \star  = \frac{{\sqrt {\tilde \beta \left( {1 + {\tilde\gamma _k}} \right)} {\mathbf{h}}_{{\rm c}_k}^{\text{H}}{{\mathbf{f}}_{{\text{FD,}}k}}}}{{\sum\limits_{j = 1}^{N_{\rm CU}} {{{\left| {{\mathbf{h}}_{{\rm c}_k}^{\text{H}}{{\mathbf{f}}_{{\text{FD,}}j}}} \right|}^2} + \sigma _{\text{n}}^2} }},
	\end{equation}
	\begin{equation}\label{qupdate1}
		q_{i,k}^ \star  = \frac{{\sqrt {\beta \omega_i} {\mathbf{h}}_{{{\text{t}}_i}}^{\text{H}}{{\mathbf{f}}_{{\text{FD,}}k}}}}{{\sum\limits_{j = 1}^{{N_{{\text{CU}}}}} {\sum\limits_{\kappa  \in \mathcal{I}_i} \omega_{\kappa} {\left| {{\mathbf{h}}_\kappa ^{\text{H}}{{\mathbf{f}}_{{\text{FD}},j}}} \right|_2^2} }  + \tilde\sigma _{\text{n}}^2}}.
	\end{equation}
	The closed-form solution of the optimal beamformer $\mathbf{f}_{{\rm FD},k}^{\star}$ is given in (\ref{BFOptSol}), where $\mu^{\star}$ is the optimal Lagrangian dual variable of the power budget constraint obtained by the line search. As discussed in the previous section, the transition from fully digital to hybrid beamformer can be readily achieved via the approximation method in \cite{HBFApp1,HBFApp2}.
	
	\begin{figure*}[hb]
		
		\centering
		\rule{\linewidth}{1pt} 
		\begin{equation}\label{BFOptSol}
			{\mathbf{f}}_{{\text{FD,}}k}^ \star  = {\left[ {{\sum\limits_{j = 1}^{{N_{{\text{CU}}}}} \left| {{p_j}} \right|^2} {{{\mathbf{h}}_{{{\text{c}}_j}}}{\mathbf{h}}_{{{\text{c}}_j}}^{\text{H}}}  + \sum\limits_{i = 1}^{{N_{{\text{Tar}}}}} {\left( {\sum\limits_{j = 1}^{{N_{{\text{CU}}}}} {{{\left| {{q_{i,j}}} \right|}^2}} } \right)\sum\limits_{\kappa  \in {\mathcal{I}_i}} \omega_{\kappa} {{{\mathbf{h}}_\kappa }{\mathbf{h}}_\kappa ^{\text{H}}} }  + {\mu ^ \star }{\mathbf{I}}} \right]^\dag }\left( {{p_k}\sqrt {\tilde \beta \left( {1 + {\tilde\gamma _k}} \right)} {{\mathbf{h}}_{{{\text{c}}_k}}} + \sum\limits_{i = 1}^{{N_{{\text{Tar}}}}} {{q_{i,k}}\sqrt{\beta \omega_i}  {{\mathbf{h}}_{{{\text{t}}_i}}}} } \right),
		\end{equation}
		
	\end{figure*}

	\subsection{Optimizing $\mathbf{F}_{\rm EM}$ with given $\mathbf{F}_{\rm RF}$ and $\mathbf{F}_{\rm BB}$}
	\subsubsection{\ac{ra} Model I}
	Following the same matrix transformation as in Sec. \ref{sustEM}, $A_k$, $B_k$, $C_{i,k}$, and $D_i$ can be written \ac{wrt} $\mathbf{\bar c}$ as ${A_k}  = {{{\mathbf{\bar c}}}^{\text{H}}}{{\mathbf{A}}_{{\text{EM,}}k{\text{,}}{{\text{c}}_k}}}{\mathbf{\bar c}}$, 
	${B_k}   = \sum_{j \ne k} {{{{\mathbf{\bar c}}}^{\text{H}}}{{\mathbf{A}}_{{\text{EM,}}j{\text{,}}{{\text{c}}_k}}}{\mathbf{\bar c}}}  + \sigma _{\text{n}}^2$, 
	${C_{i,k}} = \omega_i{{{\mathbf{\bar c}}}^{\text{H}}}{{\mathbf{A}}_{{\text{EM,}}k{\text{,}}{{\text{t}}_i}}}{\mathbf{\bar c}}$, and 
	${D_i}  = \sum_{j = 1}^{{N_{{\text{CU}}}}} {\sum_{\kappa  \in {\mathcal{I}_i}} \omega_{\kappa} {{{{\mathbf{\bar c}}}^{\text{H}}}{{\mathbf{A}}_{{\text{EM,}}j{\text{,}}\kappa }}{\mathbf{\bar c}}} }  + \tilde\sigma _{\text{n}}^2$. Using $\kappa$ and $k$  to represent the channel type and beamformer index,  ${{\mathbf{A}}_{{\text{EM,}}k{\text{,}}{{\text{c}}_k}}} = {\mathbf{a}}_{{\text{EM,}}k{\text{,}}\kappa }^{}{\mathbf{a}}_{{\text{EM,}}k{\text{,}}\kappa }^{\text{H}}$, and ${\mathbf{a}}_{{\text{EM,}}k{\text{,}}\kappa }^{}$ is constructed from ${\mathbf{a}}_{{\text{EM,}} k, \kappa }^{(n)} $ following the arrangement in (\ref{eq:arrangement}), where
	\begin{equation}
	{\mathbf{a}}_{{\text{EM,}} k, \kappa }^{(n)} = {\left[ {{\mathbf{f}}_{{\text{FD,}}k}^*} \right]_n}{\left[ {{\mathbf{h}}_\kappa ^{{\text{EM}}}} \right]_{\left( {n - 1} \right)T + 1:nT}}.
	\end{equation}
	Therefore, the \ac{em} beamforming optimization problem under \ac{ra} model I can be formulated as
		\begin{subequations}\label{optmumtem}
		\begin{align}
			\mathcal{P}_{2.2}:\;&\mathop {\max }\limits_{{\mathbf{\bar c}},\tilde\gamma_k,p_k,q_{i,k}}\; f_{\rm qua} \left( \mathbf{\bar c},\tilde\gamma_k,p_k,q_{i,k}\right)\\
			&{\rm s.t.} \ \  {\|{\mathbf{\bar c}}^{(n)}\|_2^2 = 1},\; \forall n.
		\end{align}%
	\end{subequations}
	It is straightforward to see that the above matrix operations reformulate the optimization variables in terms of $\mathbf{\bar c}$, while preserving the essential properties of the objective function in (\ref{QuaObjFun}). Therefore, the same strategy can still be employed to update $\tilde\gamma_k$, $p_k$, and $q_{i,k}$ as follows
	\begin{equation}
		{\tilde\gamma^{\star}_k} = \frac{{{{{\mathbf{\bar c}}}^{\text{H}}}{{\mathbf{A}}_{{\text{EM,}}k{\text{,}}{{\text{c}}_k}}}{\mathbf{\bar c}}}}{{\sum\limits_{j \ne k} {{{{\mathbf{\bar c}}}^{\text{H}}}{{\mathbf{A}}_{{\text{EM,}}j{\text{,}}{{\text{c}}_k}}}{\mathbf{\bar c}}}  + \sigma _{\text{n}}^2}},
	\end{equation}
	\begin{equation}
		{p^{\star}_k} = \frac{{\sqrt {\tilde \beta \left( {1 + {\tilde\gamma _k}} \right)} {\mathbf{a}}_{{\text{EM,}}k{\text{,}}{{\text{c}}_k}}^{\text{H}}{\mathbf{\bar c}}}}{{\sum\limits_{j = 1}^{{N_{{\text{CU}}}}} {{{{\mathbf{\bar c}}}^{\text{H}}}{{\mathbf{A}}_{{\text{EM,}}j{\text{,}}{{\text{c}}_k}}}{\mathbf{\bar c}}}  + \sigma _{\text{n}}^2}},
	\end{equation}
	\begin{equation}
		{q^{\star}_{i,k}} = \frac{{\sqrt{\beta \omega_i}  {\mathbf{a}}_{{\text{EM,}}k{\text{,}}{{\text{t}}_i}}^{\text{H}}{\mathbf{\bar c}}}}{{\sum\limits_{j = 1}^{{N_{{\text{CU}}}}} {\sum\limits_{\kappa  \in {\mathcal{I}_i}} \omega_{\kappa}{{{{\mathbf{\bar c}}}^{\text{H}}}{{\mathbf{A}}_{{\text{EM,}}j{\text{,}}\kappa }}{\mathbf{\bar c}}} }  + \tilde\sigma _{\text{n}}^2}}.
	\end{equation}
	
	
	Unfortunately, a closed-form solution for $\mathbf{\bar c}$ cannot be obtained because constraint (\ref{optmumtem}b) is in fact a spherical constraint. Therefore, following the same approach as before, we apply \ac{sdr} to $\mathcal{P}_{2.2}$ in (\ref{optmumtem}) with $\mathbf{\bar C}  = \mathbf{\bar c} \mathbf{\bar c}^{\rm H}$. Ignoring the subscript, all quadratic terms $\mathbf{\bar c}^{\rm H} \mathbf{A} \mathbf{\bar c}$ in $A_k,\; B_k\; C_{i,k}$, and $D_{i}$ can be rewritten as linear functions of $\mathbf{\bar C}$ as ${\rm Tr}\left(\mathbf{A} \mathbf{\bar C}\right) $. As a result, the problem of obtaining the optimal $\mathbf{\bar C}$
	is reformulated as
	\begin{subequations}\label{optmumtemsdr}
		\begin{align}
			\mathcal{P}_{2.3}:\;&\mathop {\max }\limits_{{\mathbf{\bar C}}}\; f_{\rm qua} \left( \mathbf{\bar C}\right)\\
			&{\rm s.t.} \ \  {\rm Tr}\left(\mathbf{P}_n\mathbf{\bar C}\right) = 1,\; \forall n,\\
			&\;\;\;\;\;\;\;\mathbf{\bar C} \in \mathbb{H}^{+}.
		\end{align}%
	\end{subequations}
	Since the objective function is more complicated than that of $\mathcal{P}_{1.4}$ in (\ref{optsustemsdr}), the \ac{sdr} of this problem is not guaranteed to be tight. In general, the optimal solution of the relaxed problem may not admit a rank-1 structure. To obtain a feasible rank-1 solution, we adopt Gaussian randomization \cite{SDR} and then perform block-wise power normalization to ensure feasibility.

	
	The idea of employing \ac{ao} across antennas to reduce computational complexity can also be applied here. Following the augmentation adopted in $\mathcal{P}_{1.5}$ in (\ref{optsustemsep}), the objective function can be reformulated as
	\begin{equation}\label{objfunmumtsep}
		{f_{{\text{qua}}}} \! = \! \sum\limits_{k = 1}^{N_{\rm CU}} {\tilde{\beta}}a_k +  ({\mathbf{\bar c}}{_{{\text{ext}}}^{(n)})^{\text{H}}}{{\mathbf{A}}^{(n)}}{\mathbf{\bar c}}_{{\text{ext}}}^{(n)} + 2\Re \left\{ {{{( {{{\mathbf{a}}^{(n)}}} )}^{\text{H}}}{\mathbf{\bar c}}_{{\text{ext}}}^{(n)}} \right\} ,
	\end{equation}
	where
	\begin{equation}
	{{\mathbf{A}}^{(n)}} = \sum\limits_{k = 1}^{N_{\rm CU}} \left\{ { - {{\left| {{p_k}} \right|}^2}{\mathbf{A}}_{{A_k} + {B_k}}^{(n)} + \sum\limits_{i = 1}^{N_{\rm Tar}} { - {{\left| {{q_{i,k}}} \right|}^2}{\mathbf{A}}_{{D_i}}^{(n)}} } \right\},
	\end{equation}
	\begin{equation}
		{{\mathbf{a}}^{(n)}} \!= \!\sum\limits_{k = 1}^{{N_{{\text{CU}}}}} \left\{ p_k^*\sqrt {\tilde \beta (1 \!+\! {\tilde\gamma _k})} {{ {{\mathbf{a}}_{{\text{EM,}}k{\text{,}}{{\text{c}}_k}}^{(n)}} }} \!+ \! \sum\limits_{i = 1}^{{N_{{\text{Tar}}}}} {q_{i,k}^*\sqrt{\beta \omega_i}  }   {{\mathbf{a}}_{{\text{EM,}}k{\text{,}}{{\text{t}}_i}}^{(n)}}  \right\} ,
	\end{equation}
	and
	\begin{equation}
	{\mathbf{A}}_{{A_k} + {B_k}}^{(n)}  = \sum\limits_{j=1}^{N_{\rm CU}}  = 	{\mathbf{a}}_{{\text{EM,}} j, c_k }^{(n)} (	{\mathbf{a}}_{{\text{EM,}} j, c_k }^{(n)})^{\rm H} + \sigma_{\rm n}^2 \mathbf{e}\mathbf{e}^{\rm T},
	\end{equation}
	\begin{equation}
		{\mathbf{A}}_{{D_i}}^{(n)}  = \sum\limits_{j=1}^{N_{\rm CU}} \sum\limits_{\kappa \in  \mathcal{I}_i} \omega_{\kappa} {\mathbf{a}}_{{\text{EM,}} j, \kappa }^{(n)} (	{\mathbf{a}}_{{\text{EM,}} j, \kappa }^{(n)})^{\rm H} + \tilde\sigma_{\rm n}^2 \mathbf{e}\mathbf{e}^{\rm T}.
	\end{equation}
	 It is worth emphasizing that the objective function considered here shares the same properties as that in $\mathcal{P}_{1.5}$ in (\ref{optsustemsep}), while the constraints remain unchanged. Therefore, following the same solution procedure, the optimal solution is given by
	\begin{equation}
		{\mathbf{\bar c}}_{{\text{ext}}}^{(n)\; \star } = {\mathbf{Q}}\left( {\frac{{1 + {{\mathbf{e}}^{\text{T}}}{\mathbf{Q}}{{\mathbf{a}}^{(n)}}}}{{{{\mathbf{e}}^{\text{T}}}{\mathbf{Qe}}}}{\mathbf{e}} - {{\mathbf{a}}^{(n)}}} \right),
	\end{equation}
	where ${\mathbf{Q}} = {\left( {{{\mathbf{A}}^{(n)}} - {\mu _1}{\mathbf{I}}} \right)^{ - 1}}$. Following the approach in Appendix \ref{AppdixC}, $\mu_1$ can be determined via bisection search.
	
	\subsubsection{\ac{ra} Model II}
	As before, for \ac{ra} model II, we only consider antenna-wise alternating integer optimization to reduce the complexity, and the objective function in (\ref{objfunmumtsep}) is used to evaluate the performance of different \ac{ra} patterns.
	
	\subsection{Overall Algorithm and Complexity Analysis}

	Compared with the \ac{sust} scenario in the previous section, the \ac{mumt} scenario only differs in the treatment of the objective function and the number of introduced auxiliary variables. Therefore, the same algorithmic framework can still be adopted, with the update of $\varsigma$ replaced by those of $\gamma_k$, $p_k$, and $q_{i,k}$. The dominant computational complexity of solving the fully digital beamforming problem in the \ac{mumt} scenario arises from the matrix inversion in (\ref{BFOptSol}), with complexity order $\mathcal{O}(N_{\rm T}^3)$ for each $\mu$ search per \ac{cu}. For \ac{em} beamforming under Model I, the complexity is $\mathcal{O}(((N_{\rm T}T)^2+N_{\rm T})^{3.5})$ for joint optimization, while the antenna-wise \ac{ao} scheme reduces it to $\mathcal{O}(T^3)$ for each antenna. For \ac{ra} Model II, the complexity is $\mathcal{O}((N_{\rm Scat}+N_{\rm Tar}-1)S)$ for each antenna. Although the dominant complexity orders under the \ac{sust} and \ac{mumt} scenarios are similar, the latter incurs higher computational cost in constructing intermediate matrices and typically requires more iterations. These results are summarized in Table~\ref{tab:complexity_summary_mumt}.
	
			\begin{table}[h]
		\centering
		\caption{Computational complexity under the \ac{mumt} scenario.}
		\label{tab:complexity_summary_mumt}
		\setlength{\tabcolsep}{4pt}
		\begin{tabular}{l l}
			\toprule
			\textbf{Module} & \textbf{Order} \\
			\midrule
			FD beamforming ($\mathcal{P}_{2.1}$) & \(K\mathcal O(N_{\rm T}^3)\) \\
			EM beamforming, Model I (joint \ac{sdr}) & \(\mathcal O(((N_{\rm T}T)^2+N_{\rm T})^{3.5})\) \\
			EM beamforming, Model I (\ac{ao}) & \(N_{\rm T}\mathcal O(T^3)\) \\
			RA beamforming, Model II (exhaustive)& \(N_{\rm T}\mathcal O((N_{\rm Scat}+N_{\rm Tar}-1)S)\) \\
			\bottomrule
		\end{tabular}
		
	\end{table}

	\section{Simulation Results}
	This section presents the simulation results to evaluate the performance of the proposed tri-HBF scheme in ISAC systems. Unless otherwise specified, the simulation parameters are listed in Table \ref{SimSetup}. The locations of \acp{cu}, targets, and scatterers are randomly generated over 1000 Monte Carlo trials. The \ac{ra} Model II pattern data are obtained from \cite{pingjunAntData}. 
	The \ac{ra}-based \ac{thbf} design proposed in this paper is named according to the \ac{ra} model type and the \ac{em} beamforming approach as  \emph{RA Joint SDR Model I}, \emph{ RA AO Closed-form Model I}, \emph{RA AO Brute-force Model II}.
	 Several typical non-reconfigurable antennas are considered as benchmark schemes, specifically an omnidirectional antenna and a directional antenna with pattern function $\sin \theta$ are adopted, labeled as \emph{OA HBF} and \emph{CosA HBF}, respectively.
	
	\renewcommand{\arraystretch}{1.1}
	\begin{table}[htbp]
		\caption{Simulation parameters setup.}
		\centering
		\begin{threeparttable}
			\begin{tabular}{*{3}{>{\centering\arraybackslash}m{1.0cm}>{\centering\arraybackslash}m{1.5cm}>{\centering\arraybackslash}m{5cm}}} 
				\Xhline{2pt}
				\textbf{Parameter} & \textbf{Value} & \textbf{Description} \\
				\Xhline{1pt}
				$f_c$ & 3 GHz & Carrier frequency \\
				$N_{\rm T}$ &  16  & Number of transmit antenna  \\
				$N_{\rm R}$ &  16  & Number of receive antenna  \\
				$N_{\rm Scat}$ & 2 & Number of scatterers \\
				$L_k$ & $5$   & Number of communication channel path \\
				$T$ & 9 & Truncation length\\
				$S$ & 64 & Number of available radiation patterns \\
				$d $ & 0.005 m &  Element spacing\\
				$\sigma_{\rm n}^2$ & -94 dBm & Noise power \\
				\Xhline{2pt}
			\end{tabular}%
		\end{threeparttable}
		\label{SimSetup}%
	\end{table}%

	Fig. \ref{vsSNR} illustrates the objective value versus transmit power under the \ac{sust} scenario with 2 \ac{rf} chains and the \ac{mumt} scenario with 4 \ac{rf} chains. Thanks to the additional \ac{dofs} enabled by \ac{ra} and the proposed \ac{thbf} designs, the \emph{RA AO Brute-force Model II } scheme achieves the same objective performance with approximately 2 dBm lower transmit power compared to \emph{OA HBF} and \emph{ CosA HBF } schemes. By further exploiting the full flexibility of Model I, the \emph{RA Joint SDR Model I} and \emph{RA AO Closed-form Model I } schemes yield a transmit power reduction of up to about 10 dBm. The superior performance of Model I demonstrates the importance of fully leveraging \ac{em}-domain optimization in \ac{ra} designs. The insights gained from these results may serve as useful guidelines for next-generation \ac{ra} architectures, particularly in terms of improving pattern resolution and controllability to better balance communication and sensing objectives.

	\begin{figure}[!t]
		\centering
		\begin{minipage}{0.24\textwidth}
			\centering
			\includegraphics[scale=0.5]{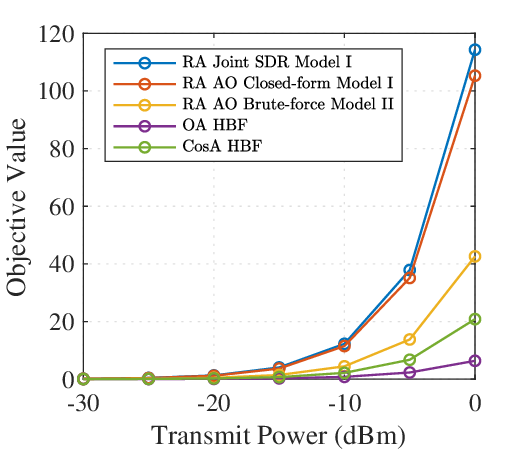}
			\subcaption*{(a)}
			\vspace{-1mm}
		\end{minipage}
		\hfill
		\begin{minipage}{0.24\textwidth}
			\centering
			\includegraphics[scale=0.5]{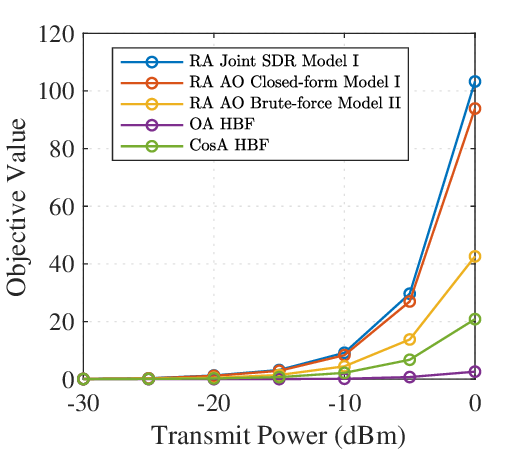}
			\subcaption*{(b)}
			\vspace{-1mm}
		\end{minipage}
		\caption{Objective Value versus transmit power: (a) \ac{sust}; (b)2-\acp{cu} 2-targets scenarios.}\label{vsSNR}
		\vspace{-3mm}
	\end{figure}
	
\begin{figure}[!t]
	\centering
	\begin{minipage}{0.24\textwidth}
		\centering
		\includegraphics[scale=0.5]{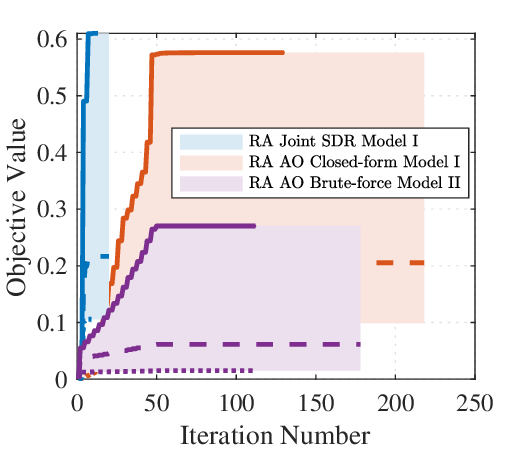}
		\subcaption*{(a)}
		\vspace{-1mm}
	\end{minipage}
	\hfill
	\begin{minipage}{0.24\textwidth}
		\centering
		\includegraphics[scale=0.5]{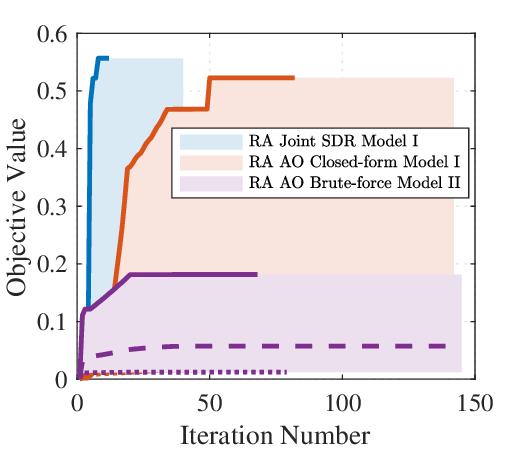}
		\subcaption*{(b)}
		\vspace{-1mm}
	\end{minipage}
	\caption{Convergence behavior : (a) \ac{sust}; (b) 2-\acp{cu} 2-targets scenarios. (Monte Carlo simulation with 1000 trials.)}\label{ConLine}
	\vspace{-3mm}
\end{figure}

\begin{figure}[!h]
	\centering
	\begin{minipage}{0.24\textwidth}
		\centering
		\includegraphics[scale=0.5]{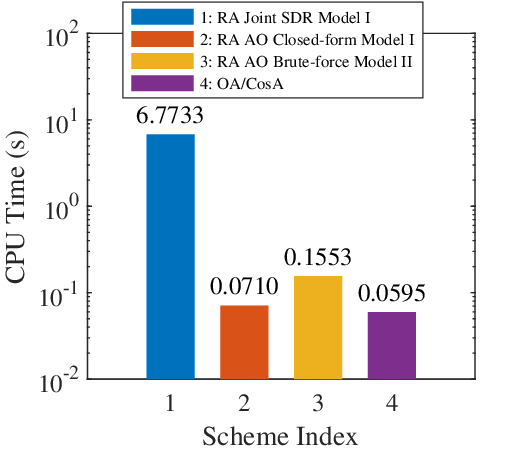}
		\subcaption*{(a)}
		\vspace{-1mm}
	\end{minipage}
	\hfill
	\begin{minipage}{0.24\textwidth}
		\centering
		\includegraphics[scale=0.5]{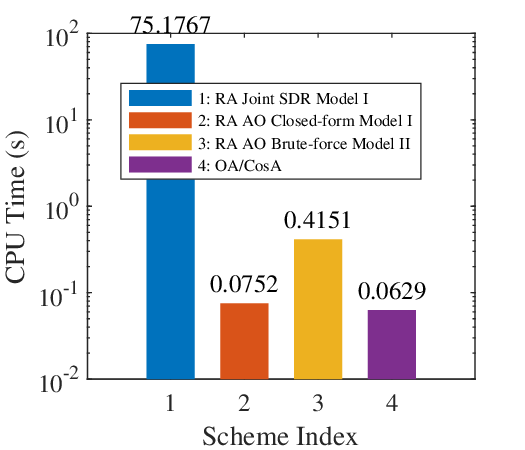}
		\subcaption*{(b)}
		\vspace{-1mm}
	\end{minipage}
	\caption{CPU time comparisons : (a) \ac{sust}; (b) 2-\acp{cu} 2-targets scenarios.}\label{CPUTime}
	\vspace{-3mm}
\end{figure}

\begin{figure}[!h]
	\centering
	\begin{minipage}{0.24\textwidth}
		\centering
		\includegraphics[scale=0.5]{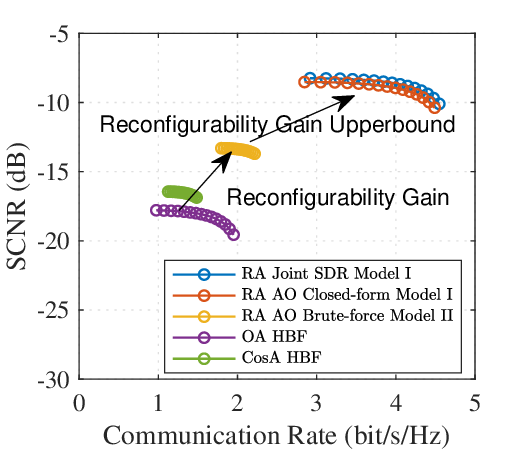}
		\subcaption*{(a)}
		\vspace{-1mm}
	\end{minipage}
	\hfill
	\begin{minipage}{0.24\textwidth}
		\centering
		\includegraphics[scale=0.5]{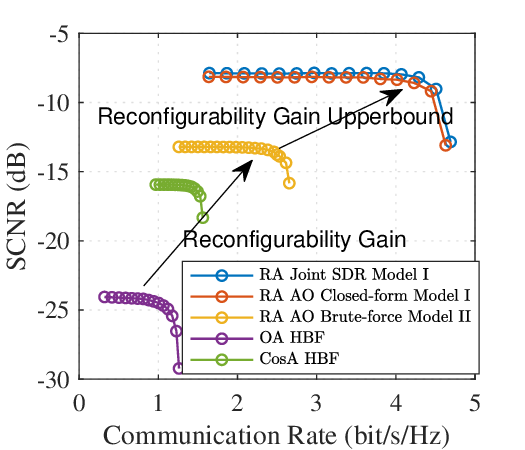}
		\subcaption*{(b)}
		\vspace{-1mm}
	\end{minipage}
	\caption{Trade-off relationship: (a) \ac{sust}; (b) 2-\acp{cu} 2-targets scenarios.}\label{tradeoffline}
	\vspace{-3mm}
\end{figure}

Then, we present the convergence behavior and \ac{cpu} running time of the three proposed schemes in Figs. \ref{ConLine} and \ref{CPUTime}. The transmit power is set to -30 dBm, and the weighting factor $\tilde{\beta}$ is set to $5 \times 10^{-3}$. All simulations are implemented in MATLAB R2024a on a laptop equipped with a 4.05 GHz Apple M3 Pro CPU and 18 GB RAM. 
In Fig. \ref{ConLine}, the solid, dashed, and dotted lines represent the best, average, and worst converged objective values over the 1000 trials, respectively. As observed, under both the \ac{sust} scenario with 2 \ac{rf} chains and the \ac{mumt} scenario with 4 \ac{rf} chains, the \emph{RA Joint SDR Model I} scheme achieves the best performance, followed by the \emph{ RA AO Closed-form Model I} scheme, while the \emph{RA AO Brute-force Model II} scheme performs the worst among the three.
From Fig. \ref{CPUTime}, it can be seen that the \ac{cpu} time exhibits an opposite trend. Under both scenarios, the \emph{RA Joint SDR Model I } scheme requires the longest running time due to the incorporation of \ac{sdr}. The \emph{RA AO Brute-force Model II} scheme ranks next, as it performs exhaustive search over 64 predefined patterns. Benefiting from antenna-wise \ac{ao} with closed-form updates, the \emph{RA AO Closed-form Model I} scheme achieves a significantly reduced computational cost. Compared with the above \ac{ra}-based \ac{thbf} schemes, the \emph{OA/CosA HBF} schemes incur the shortest running time, since no \ac{em} beamforming optimization is required.

Fig. \ref{tradeoffline} illustrates the communication–sensing trade-off performance of different schemes under the \ac{sust} scenario with 2 \ac{rf} chains and the \ac{mumt} scenario with 4 \ac{rf} chains. The transmit power is set to -30 dBm, and the weighting factor $\tilde{\beta}$ is set to $5 \times 10^{-3}$. It can be observed that, in both scenarios, the \emph{RA AO Brute-force Model II} scheme achieves a superior trade-off performance compared to the conventional non-\ac{ra} schemes, namely \emph{OA HBF} and \emph{CosA HBF}. We refer to this performance improvement as the reconfigurability gain.
Building upon this gain, both the \emph{RA Joint SDR Model I} and \emph{RA AO Closed-form Model I} schemes further strike a better trade-off relationship, providing an upper bound on the reconfigurability gain in terms of trade-off performance and flexibility.

\begin{figure*}[!t]
	\centering
	\begin{minipage}[t]{0.24\textwidth}
		\centering
		\includegraphics[width=\linewidth]{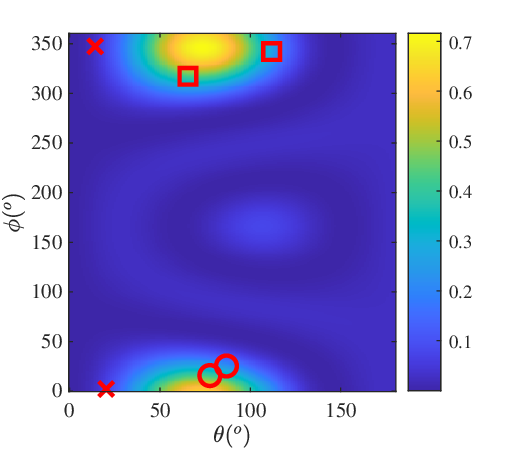}
		\subcaption*{(a) Single \ac{ra} Model I}
		\vspace{-1mm}
	\end{minipage}\hfill
	\begin{minipage}[t]{0.24\textwidth}
		\centering
		\includegraphics[width=\linewidth]{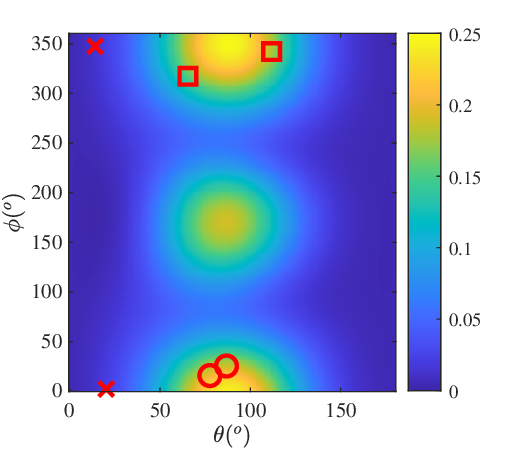}
		\subcaption*{(b) Single \ac{ra} Model II}
		\vspace{-1mm}
	\end{minipage}\hfill
	\begin{minipage}[t]{0.24\textwidth}
		\centering
		\includegraphics[width=\linewidth]{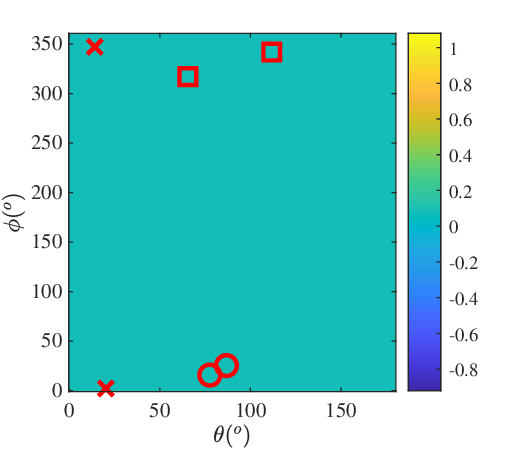}
		\subcaption*{(c) Single OA}
		\vspace{-1mm}
	\end{minipage}\hfill
	\begin{minipage}[t]{0.24\textwidth}
		\centering
		\includegraphics[width=\linewidth]{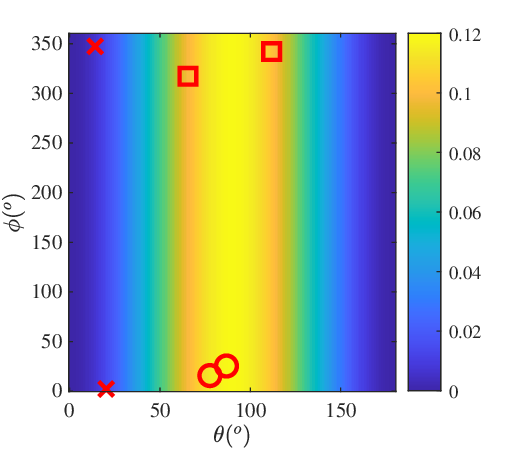}
		\subcaption*{(d) Single CosA}
		\vspace{-1mm}
	\end{minipage}
	
	\vspace{2mm} 
	
	\begin{minipage}[t]{0.24\textwidth}
		\centering
		\includegraphics[width=\linewidth]{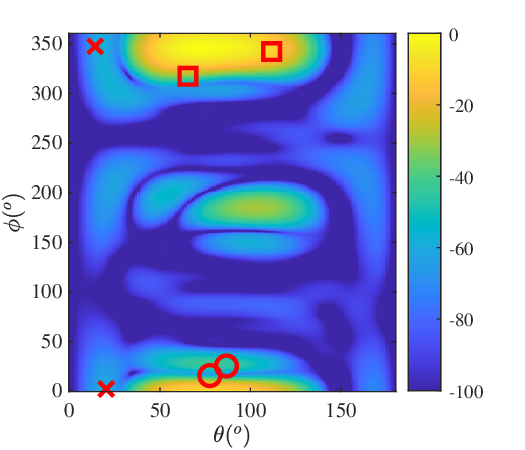}
		\subcaption*{(e) \ac{ra} array Model I}
		\vspace{-1mm}
	\end{minipage}\hfill
	\begin{minipage}[t]{0.24\textwidth}
		\centering
		\includegraphics[width=\linewidth]{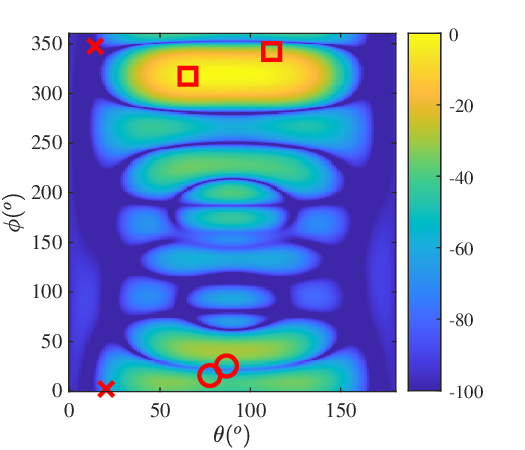}
		\subcaption*{(f) \ac{ra} array Model II}
		\vspace{-1mm}
	\end{minipage}\hfill
	\begin{minipage}[t]{0.24\textwidth}
		\centering
		\includegraphics[width=\linewidth]{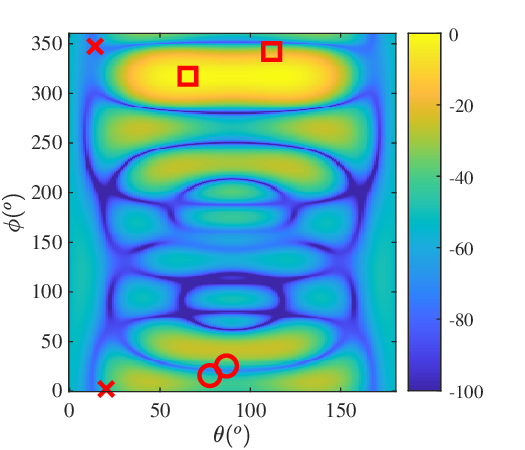}
		\subcaption*{(g) \ac{oa} array}
		\vspace{-1mm}
	\end{minipage}\hfill
	\begin{minipage}[t]{0.24\textwidth}
	\centering
	\includegraphics[width=\linewidth]{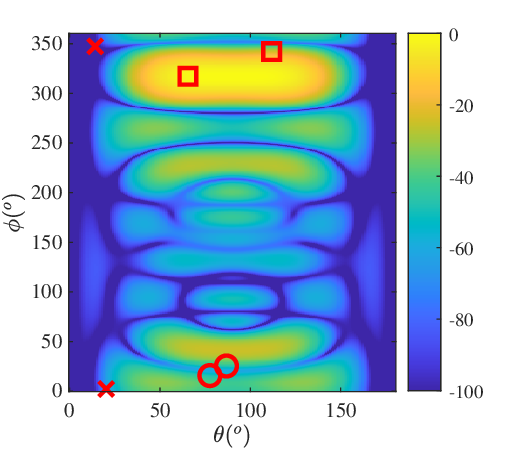}
	\subcaption*{(h) CosA array}
	\vspace{-1mm}
\end{minipage}
	\vspace{0mm}
	\caption{Radiation pattern comparisons where $N_{\rm CU} = 2$, $N_{\rm Tar} = 2$, and $N_{\rm RF} = 4$. (circle: CU, square: target, cross: scatterers).}
	\label{Beampattern}
	\vspace{-2mm}
\end{figure*}

To more clearly illustrate the advantage of RA, the single-antenna radiation patterns and the array transmit beampatterns of different schemes are shown in Fig. \ref{Beampattern}. The transmit power is set to -30 dBm, and the weighting factor $\tilde{\beta}$ is set to $5 \times 10^{-3}$. From the single-antenna perspective, both OA and CosA exhibit fixed radiation coverage. The OA radiates energy uniformly in all directions, whereas the CosA concentrates its radiation around $\theta = 90^{\rm o}$. In contrast, after \ac{thbf} optimization, the \ac{ra} is capable of concentrating radiated energy toward the intended \acp{cu} and sensing targets. Furthermore, owing to the more flexible spherical harmonics representation adopted in Model I, it achieves stronger directivity than Model II, which manifested by reduced power leakage toward other regions and a higher main-lobe peak gain. Intuitively, better single-antenna directivity leads to improved array-level transmit performance, which is validated by the transmit beampatterns shown in Fig. \ref{Beampattern}. Specifically, CosA outperforms OA in suppressing energy at scatterer locations. Additionally, the array beampatterns corresponding to the two \ac{ra} models show less energy leakage, with Model I demonstrating particularly reduced sidelobe levels, thus enhancing power concentration in the main beam.

From a hardware reduction perspective, the objective values versus the number of RF chains and antennas are shown in Figs. \ref{vsNRF} and \ref{vsNT}, where the transmit power is set to 0 dBm and the communication–sensing trade-off factor is fixed at 0.5.
In general, all schemes exhibit improved performance with an increasing number of RF chains and antennas, with the \ac{ra}-based 
\ac{thbf} consistently outperforming the conventional OA and CosA \ac{hbf} schemes. Notably, as shown in Fig. \ref{vsNRF}, the \emph{RA AO Brute-force Model II} scheme, despite its relatively lower performance, outperforms the conventional fully digital beamforming schemes (OA and CosA). Furthermore, compared to \emph{CosA HBF} scheme, \emph{RA AO Brute-force Model II} scheme achieves a reduction of about one RF chain. Thanks to the arbitrary pattern generation ability of RA Model I, both \emph{RA Joint SDR Model I} and \emph{RA AO Closed-form Model I} achieve significantly better performance gains. With only two RF chains, they outperform the \emph{OA HBF} and \emph{CosA HBF} schemes by a considerable margin, showcasing the future opportunities for advanced \ac{ra} design and development.
As illustrated in Fig. \ref{vsNT}, for the same number of RF chains, a limited number of \acp{ra} can replicate the performance of large antenna arrays in conventional OA and CosA schemes. Specifically, Model II can save around 10 antennas compared to the CosA scheme, while Model I results in a saving of approximately 30 antennas. This clearly highlights the advantages of \ac{ra} in environments with limited space for large antenna array.

\begin{figure}
	\centering
	\includegraphics[scale=0.8]{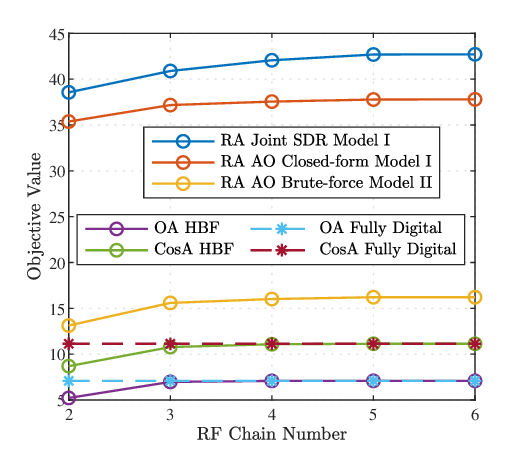}
	\vspace{-1mm}
	\caption{ Objective value versus $N_{\rm RF}$ where $N_{\rm CU} = 2$ and $N_{\rm Tar} = 2$.}\label{vsNRF}
	\vspace{-3mm}
\end{figure}

\begin{figure}
		\centering
	\includegraphics[scale=0.8]{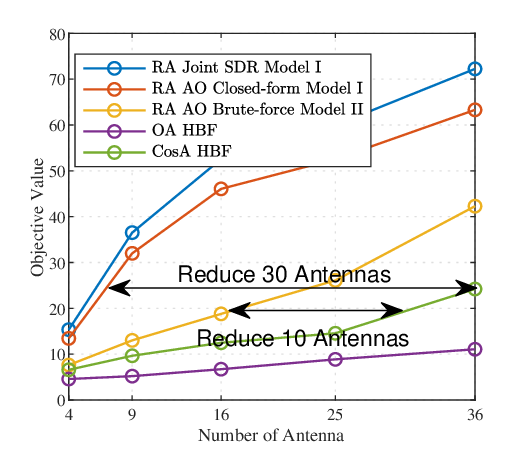}
	\vspace{-1mm}
	\caption{Objective value versus $N_{\rm T}$ where $N_{\rm RF} = 4$, $N_{\rm CU} = 2$ and $N_{\rm Tar} = 2$.}\label{vsNT}
	\vspace{-3mm}
\end{figure}

	\section{Conclusion}
	This paper presents a novel \ac{thbf} design for ISAC system with \ac{ra}. Two \ac{ra} models,  \emph{arbitrary pattern generation} model and \emph{discrete pattern selection} model, are introduced and optimized in both \ac{sust} and \ac{mumt} scenarios. Compared with the conventional hybrid and fully digital schemes with non-\ac{ra}, the proposed schemes achieve superior performance in terms of objective value, communication and sensing trade-off, beam pattern focusing, and hardware overhead reduction, demonstrating the strong potential of \ac{ra} in future \ac{isac} systems.

	\appendices
	\section{}\label{AppdixA}
	Employing \ac{sdr} as $\mathbf{R}_{\rm FD} = \mathbf{f}_{\rm FD}\mathbf{f}_{\rm FD}^{\rm H}
	$, the objective function becomes
	\begin{equation}\label{sdrobj}
		f = \tilde \beta \frac{{{\text{Tr}}\left( {{{\mathbf{R}}_{{\text{EM,1}}}}{{\mathbf{R}}_{{\text{FD}}}}} \right)}}{{\sigma _{\text{n}}^2}} + \beta \frac{{{\text{Tr}}\left( {{{\mathbf{R}}_{{\text{EM,t}}}}{{\mathbf{R}}_{{\text{FD}}}}} \right)}}{{{\text{Tr}}\left( {{{\mathbf{R}}_{{\text{EM,s}}}}{{\mathbf{R}}_{{\text{FD}}}}} \right) + \sigma _{\text{n}}^2}}.
	\end{equation}
	Since both the numerator and the denominator are linear functions with respect to $\mathbf{R}_{\rm FD}$, they are simultaneously convex and concave. As a result, Dinkelbach's transform can be directly employed to achieve the following \ac{sdp}
	\begin{subequations}
		\begin{align}
			\mathop {\max }\limits_{{{\mathbf{R}}_{{\text{FD}}}}} &\frac{{\tilde \beta {\text{Tr}}\left( {{{\mathbf{R}}_{{\text{EM,1}}}}{{\mathbf{R}}_{{\text{FD}}}}} \right)}}{{\sigma _{\text{n}}^2}} + \beta {\text{Tr}}\left( {{{\mathbf{R}}_{{\text{EM,t}}}}{{\mathbf{R}}_{{\text{FD}}}}} \right) \notag \\
			&- \lambda \left[ {{\text{Tr}}\left( {{{\mathbf{R}}_{{\text{EM,s}}}}{{\mathbf{R}}_{{\text{FD}}}}} \right) + \sigma _{\text{n}}^2} \right]  \\
			&{\rm s.t.} \ \ {\rm Tr}\left(\mathbf{R}_{\rm FD}\right) \leq P,\\
			&\;\;\;\;\;\;\;\mathbf{R}_{\rm FD} \in \mathbb{H}^{+}.
		\end{align}
	\end{subequations} 
	
	According to Proposition 3.5 of \cite{SDPRank}, this relaxation is tight since there exists only one constraint. Since a rank-1 solution is always available, we may bypass \ac{sdr} and perform equivalent Dinkelbach's transform directly on the original objective function. This completes the proof.
	
	\section{}\label{AppdixB}
	Following the approach in Appendix \ref{AppdixA}, it is sufficient to show that Problem( \ref{optsustemsdr}) always admits a rank-1 optimal solution. To prove this, the \ac{kkt} conditions of Problem (\ref{optsustemsdr}) is firstly given by
	\begin{equation}
		\left\{ \begin{gathered}
			{{\mathbf{Z}}^ \star } = \mathbf{T}_1 + \mathbf{T}_2 \hfill \\
			{{\mathbf{C}}^ \star } \succcurlyeq {\mathbf{0}},{{\mathbf{Z}}^ \star } \succcurlyeq {\mathbf{0}},\mu _n^ \star  \geqslant 0, \hfill \\
			{\text{Tr}}\left( {{{\mathbf{Z}}^ \star }{{\mathbf{C}}^ \star }} \right) = 0 \Rightarrow {{\mathbf{Z}}^ \star }{{\mathbf{C}}^ \star } = {\mathbf{0}}. \hfill \\ 
		\end{gathered}  \right.
	\end{equation}
	where $\mu_n$ is the Lagrangian multiplier for the constraints (\ref{optsustemsdr}b), $\mathbf{Z}$ denotes the Lagrangian multiplier matrix for constraint (\ref{optsustemsdr}c). $\mathbf{T}_1$ is a rank-$r$ matrix expressed as
	\begin{equation}
		\mathbf{T}_1 \!=\! { (\beta\omega{\mathbf{a}}_{{\text{EM,t}}}^{}{\mathbf{a}}_{{\text{EM,t}}}^{\text{H}} \!-\! \varsigma   \sum\limits_{\kappa \in \mathcal{I}} \! \omega_{\kappa} {{\mathbf{a}}_{{\text{EM,}\kappa}}^m{\mathbf{a}}_{{\text{EM},\kappa}}^{m\;{\text{H}}})}  \!-\! \sum\limits_n \! {\mu _n^ \star {{\mathbf{P}}_n}} }.
	\end{equation}
	$\mathbf{T}_2$ is a rank-1 matrix, given by
	\begin{equation}
		\mathbf{T}_2 = {{{\tilde \beta }}{\mathbf{a}}_{{\text{EM,c}}}^{}{\mathbf{a}}_{{\text{EM,c}}}^{\text{H}}}/{\sigma _{\text{n}}^2}.
	\end{equation}
	The \ac{kkt} structure and matrix rank properties of this problem are the same as those in \cite{rank1proof}. Using the Proof of Proposition 4.1 in \cite{rank1proof}, we can complete the proof.
	
	\section{Proof for Proposition \ref{Pro1}}\label{AppdixC}
		Let $(\mathbf{c}^{ \star } , \mu_1^\star, \mu_2^\star)$ be a stationary point satisfying the \ac{kkt} conditions.
		The Hessian matrix of $\mathcal{L}$ \ac{wrt} $\mathbf{\bar c}_{\rm ext}^{(n)}$ as 
		\begin{equation}
			\nabla_{\mathbf c}^2 \mathcal L
			= 2 \bigl( \mathbf A^{(n)} - \mu_1 \mathbf I \bigr).
		\end{equation}
		Let $\delta \mathbf c$ denote any first-order feasible perturbation satisfying $\mathbf e^{T} \delta \mathbf c = 0$ and $\mathrm{Re} \bigl\{ \mathbf c^{\star H} \delta \mathbf c \bigr\} = 0$, which characterizes the tangent space of the constraint manifold	at $\mathbf c^\star$. Since $\mathbf c^\star$ corresponds to a local maximizer, the second-order necessary condition requires
		\begin{equation}
			\delta \mathbf c^{\rm H}
			\bigl( \mathbf A^{(n)} - \mu_1^\star \mathbf I \bigr)
			\delta \mathbf c
			\le 0,
			\quad \forall\, \delta \mathbf c.
			\label{eq:second_order_condition}
		\end{equation}
		
		Assume by contradiction that
		$\mu_1^\star < {\varpi _{T + 1}}$.
		Then the matrix $\mathbf A^{(n)} - \mu_1^\star \mathbf I$ admits at least one
		positive eigenvalue, implying the existence of a feasible direction
		$\delta \mathbf c$ such that
		\begin{equation}
			\delta \mathbf c^{\rm H}
			\bigl( \mathbf A^{(n)} - \mu_1^\star \mathbf I \bigr)
			\delta \mathbf c > 0,
		\end{equation}
		which contradicts~\eqref{eq:second_order_condition}.
		Hence, $\mu_1^\star \ge {\varpi _{T + 1}}$.
		Moreover, the closed-form solution (\ref{eq:closedformcn}) requires $\mathbf A^{(n)} - \mu_1^\star \mathbf I$ to be nonsingular, which implies that $\mu_1^\star > {\varpi _{T + 1}}$. This completes the proof of the existence of  $\mu_1^{\star}$ over $(\varpi_{T+1},+\infty)$.
		
		Next, we give the proof of monotonicity of $\|\mathbf{c}(\mu_1)\|_2^2$ over $\mu_1 \in (\varpi_{T+1},+\infty)$. Let $\mathbf{Q}(\mu_1) \triangleq {{{\left( {{{\mathbf{A}}^{(n)}} - {\mu _1}\mathbf{I}} \right)}^{ - 1}}}$, we have 
		\begin{equation}
			\|\mathbf{c}(\mu_1)\|_2^2
			=\frac{\mathbf e^{\rm H}\mathbf{Q}^2(\mu_1)\mathbf e}{\left[\mathbf e^{\rm H}\mathbf Q(\mu_1)\mathbf e\right]^2}
			\triangleq \frac{a(\mu_1)}{s(\mu_1)^2}.
		\end{equation}
		Taking the derivative of the above equation yields
		\begin{equation}\label{eq:normcderivative}
			\frac{d}{d\mu_1}\,\|\mathbf{c}(\mu_1)\|_2^2
			=
			\frac{2\bigl(\mathbf{e}^{\rm H}\mathbf{Q}^3\mathbf{e}\bigr)\bigl(\mathbf{e}^{\rm H}\mathbf{Q}\mathbf{e}\bigr)
				-2\bigl(\mathbf{e}^{\mathrm H}\mathbf{Q}^2\mathbf{e}\bigr)^2}{\bigl(\mathbf{e}^{\rm H}\mathbf{Q}\mathbf{e}\bigr)^3}.
		\end{equation}
		Since $\mathbf{Q}$ is a semi-negative definite matrix over $\mu_1^{\star}$ over $(\varpi_{T+1},+\infty)$, the derivative in (\ref{eq:normcderivative}) is non-positive. Therefore, $\|\mathbf{c}(\mu_1)\|_2^2$ is non-increasing \ac{wrt} $\mu_1$. This completes the proof of the monotonicity of the power constraint with respect to $\mu_1$.
		
		Combining the proofs of existence and monotonicity, we conclude that there exists a unique optimal solution for $\mu_1$ within the interval $(\varpi_{T+1},+\infty)$.
	
	\bibliographystyle{IEEEtran}      
	\bibliography{IEEEabrv,references}
\end{document}